\DeclareOldFontCommand{\bf}{\normalfont\bfseries}{\mathbf}
\newtheorem{theorem}{Theorem}[section]
\newtheorem{corollary}[theorem]{Corollary}
\newtheorem{proposition}[theorem]{Proposition}
\newtheorem{lemma}[theorem]{Lemma}
\newtheorem{open}[theorem]{Open question}
\theoremstyle{definition}
\newtheorem{definition}[theorem]{Definition}
\newtheorem{remark}[theorem]{Remark}
\newtheorem{claim}{Claim}
\definecolor{ForestGreen}{rgb}{.13,.54,.13}
\newcommand{\mms}{\textsc{MMS}}
\newcommand{\wrap}[2]{\textsc{Wrap}(#1,#2)}
\newcommand{\vreq}{V_\text{req}}
\newcommand{\calP}{{\mathcal{P}}}
\newcommand{\cut}{\textsc{Cut}}
\newcommand{\eval}{\textsc{Eval}}
\newcommand{\risn}{\textsc{RISN}}
\newcommand{\gsn}{\textsc{GuSN}}
\newcommand{\gumms}{\textsc{GuMMS}}
\newcommand{\nrect}{\textsc{NRect}}
\newcommand{\ceil}[1]{\lceil #1 \rceil}
\begin{document}

\title{Keep Your Distance: \\Land Division With Separation}

\author{
Edith Elkind\\University of Oxford
\and
Erel Segal-Halevi\\Ariel University
\and
Warut Suksompong\\National University of Singapore
}

\date{\vspace{-3ex}}

\maketitle

\begin{abstract}
This paper is part of an ongoing endeavor to bring the theory of fair division closer to practice by 
handling requirements from real-life applications. We focus on two requirements originating from the 
division of land estates: (1) each agent should receive a plot of a usable geometric shape, and 
(2) plots of different agents must be physically separated. 
With these requirements, the classic fairness notion of 
\emph{proportionality} is impractical, since it may be impossible to attain any multiplicative 
approximation of it. In contrast, the \emph{ordinal maximin share approximation}, introduced by Budish in 
2011, provides meaningful fairness guarantees. We prove upper and lower bounds on achievable maximin 
share guarantees when the usable shapes are squares, fat rectangles, or arbitrary axis-aligned rectangles,
and explore the algorithmic and query complexity of finding fair partitions in this setting.
Our work makes use of tools and concepts from computational geometry such as independent sets of rectangles and guillotine partitions.
\end{abstract}

\section{Introduction}
The problem of fairly allocating a divisible resource has a long history, dating back to the seminal 
article of Polish mathematician Hugo \citet{Steinhaus48}. In its basic formulation, the resource, 
which is metaphorically viewed as a cake, comes in the form of an interval. The aim is to find a 
division satisfying some fairness criteria such as \emph{proportionality}, which means that if there are $n$ 
agents, the value that each of them receives should be at least $1/n$ of the entire cake. 
Not only does a proportional allocation always exist, but it can also be found efficiently \citep{DubinsSp61}.

While the interval cake is simple and consequently useful as a starting point, it is often insufficient 
for modeling real-world applications, especially when combined with the common requirement that each agent 
should receive a connected piece of the cake.\footnote{As \citet{Stromquist80} memorably wrote, without 
such connectivity requirements, there is a danger that agents will receive a `countable union of 
crumbs'.} In particular, when allocating real estate, geometric considerations play a crucial role: it is 
difficult to build a house or raise cattle on a thin or highly zigzagged piece of land even if its total area 
is large. 
Applying algorithms designed for the interval cake may result in an agent receiving, for example, a $5\times 500$ meter land plot, which is hardly useful for any purpose.
Such considerations have motivated researchers to study fairness in land division, which also 
serves to model the allocation of other two-dimensional objects such as advertising spaces 
\citep{BerliantThDu92,IchiishiId99,BerliantDu04,DallAglioMa09,IyerHu09,Devulapalli14,SegalhaleviNiHa17,SegalhaleviHaAu20}. 
These studies have uncovered important differences between land division and interval division: for instance, 
when agents must be allocated square pieces, \citet{SegalhaleviNiHa17} show that we cannot guarantee the 
agents more than $1/(2n)$ of their entire value in the worst case, even when the agents have identical 
valuations over the land.

A related issue, which frequently arises in practice, is that agents' pieces may have to be 
separated from one another: we may need to
leave a space between adjacent pieces of land, e.g., to prevent dispute between owners, 
provide access to the plots, avoid cross-fertilization of different crops, 
or ensure safe social distancing among vendors in a market. The formal study 
of fair division with separation constraints was initiated by \citet{ElkindSeSu22},
who focus on the one-dimensional setting. The goal of our work is to extend 
this analysis to two dimensions, i.e., to analyze
{\em fair division of land under separation constraints}.
In doing so, we will make use of insights and tools from computational geometry such as fat rectangles and guillotine partitions.

We assume that each agent must obtain a contiguous piece of land, 
and the shares that any two agents receive must be separated by 
a distance of at least $s$, where $s$ is a given parameter that is independent of the land value. 
In the presence of separation constraints, \emph{no} multiplicative approximation of proportionality can 
be guaranteed, even in one dimension: when all of the agents' values are concentrated within distance $s$, 
only one agent can receive a positive utility. \citet{ElkindSeSu22} therefore consider the well-known 
criterion of \emph{maximin share fairness} \citep{Budish11,KurokawaPrWa18}---the value that each agent receives 
must be at least her $1$-out-of-$n$ maximin share, i.e., 
the best share that she can guarantee for herself by dividing the resource into $n$ bundles 
and accepting the worst one. \citet{ElkindSeSu22} show that 
this criterion can be satisfied for an interval cake,
while an ordinal approximation of it can be attained for a one-dimensional circular cake.

\subsection{Our contribution}

In this work, we establish that maximin 
share fairness and relaxations thereof can also provide worst-case guarantees in land allocation with 
separation. Moreover, since full proportionality cannot always be attained in this setting even in the 
case of no separation (i.e., $s=0$) due to constraints on the shape of the pieces, our results have interesting implications for that case as well. 

After discussing further related work in \Cref{sec:relatedwork} and introducing the definitions and notation in \Cref{sec:prelim}, 
we prove a general impossibility result in \Cref{sec:impossibility}.
Specifically, we show that when $s>0$, it is impossible to guarantee
to each agent a positive fraction of her $1$-out-of-$n$ maximin share, regardless of the number of agents or the allowed shapes of the pieces. 
Therefore, in the rest
of the paper, we focus on an ordinal notion of approximation.
Specifically, we ask for the smallest value of $k\ge n$ such that we can guarantee
each agent her $1$-out-of-$k$ maximin share, meaning that we allow each agent to divide the land into $k$ parts instead of $n$ parts when computing the maximin share.\footnote{This ordinal relaxation of the maximin share was introduced by \citet{Budish11}, who studied the case $k = n+1$.}

For our positive results, we make the practically common assumption that both the land to be divided and each agent's piece
are axis-aligned rectangles. 
In \Cref{sec:approx}, we show that if all rectangles
(both in agents' maximin partitions and in the final allocation) are required to be \emph{$r$-fat}, 
i.e., the ratio of the length of the longer side to the length of the shorter side is bounded
by a constant $r\ge 1$, then it suffices to set $k=(2\lceil r\rceil+2)n -(3\lceil r\rceil+2)$.
In particular, if all land pieces are required to be squares (i.e., $r=1$), we obtain $k=4n-5$.
Without the fatness assumption, the problem becomes more difficult, 
and the technique we use for fat rectangles does not even yield any finite approximation. 
Nevertheless, we devise a more sophisticated method, based on recursion, that allows us to guarantee a finite approximation: we show that it suffices to set $k=17\cdot 2^{n-3}$, 
and provide stronger bounds for small values of $n$. 
In particular, for $n=2$ we can set $k=3$, which is optimal.

Our positive results in \Cref{sec:approx} are constructive, in the sense that, given each agent's $1$-out-of-$k$
maximin partition (i.e., a partition into $k$ pieces where the value of each piece
is at least the agent's $1$-out-of-$k$ maximin share), 
we can divide the land among the agents so that each agent receives her $1$-out-of-$k$
share, using a natural adaptation of the standard Robertson--Webb model from cake cutting \citep{RobertsonWe98}.
However, it is not clear how a $1$-out-of-$k$ maximin partition can be efficiently computed or even approximated.
To circumvent this difficulty, in \Cref{sec:guillotine}, we focus on a special class of land partitions known as 
{\em guillotine partitions}; 
intuitively, these are partitions that can be obtained by a sequence of edge-to-edge cuts.\footnote{Guillotine partitions have been extensively studied in computational geometry; we refer to the beginning of \Cref{sec:guillotine-definition} for some references.} 
We show that we can efficiently compute an approximately optimal guillotine partition,
and that the loss caused by using guillotine partitions compared to arbitrary partitions can be bounded. 
Combining
these results with our ordinal approximation algorithms from \Cref{sec:approx}, we obtain
approximation algorithms for computing a maximin allocation.

\section{Related Work}
\label{sec:relatedwork}

\subsection{Geometric division}
Division problems are abundant in computational geometry. A survey by \citet{keil2000polygon}
lists over 100 papers about different variants of
such problems. 
A typical problem involves a given polygon $C$ and a given family $U$ of polygons (e.g., triangles, squares, rectangles, stars, or spirals). 
The polygon $C$ is to be partitioned into a number of parts, each of which belongs to the family $U$. 
The partition
should satisfy requirements such as  minimizing the
number of pieces or minimizing the total perimeter of
the pieces. 
Sometimes it is also required that the
pieces have the same area as well as the same perimeter \citep{nandakumar2012fair,blagojevic2014convex,armaselu2015algorithms,frettloh2022fair}.

However, the value
of land is much more than its shape, area, or perimeter. 
For
example, a land plot near the sea may have a much higher value than one with exactly the same
shape and area in the middle of a desert. Geometric
partition problems do not handle such considerations.

\subsection{Additional related work}

In considering fair division with separation, we build on the work of \citet{ElkindSeSu22},
who investigate the one-dimensional variant of this problem. 
Fair land division with constraints on the shape of usable pieces
has been previously studied \citep{SegalhaleviNiHa17,SegalhaleviHaAu20}.
We follow these works in considering fat rectangles and guillotine cuts;
however, the fairness notions considered in these papers are (partial) proportionality
and envy-freeness, whereas our work concerns maximin fairness. 
As we noted earlier, under separation constraints, it can happen that all agents but one receive zero utility in every allocation, which renders proportionality
and envy-freeness infeasible.

Our analysis 
is also somewhat similar in spirit
to a number of recent works on dividing a cake represented by a general graph, which generalizes both the interval and the cycle (a.k.a.~pie) setting. 
Several fairness notions have been studied in that setting: partial proportionality \citep{BeiSu21}, envy-freeness \citep{IgarashiZw21}, and maximin share fairness \citep{ElkindSeSu21-Graph}.
In all of those works, the cake is still one-dimensional: it is a union of a finite number of intervals. 
As we show in this work, a two-dimensional cake is fundamentally different.
Our work fits into the ongoing endeavor to bring the theory of fair division closer to practice by 
handling constraints from real-life applications---\citet{Suksompong21} surveys other types of constraints that have been investigated in the fair division literature.

Finally, we remark that we impose the geometric and separation requirements not only on the final allocation but also in the definition of the maximin share benchmark.
This is consistent with prior uses of the maximin share under constraints \citep{BouveretCeEl17,BiswasBa18,LoncTr20,ElkindSeSu21-Graph,ElkindSeSu22,BeiIgLu22}.

\section{Preliminaries}
\label{sec:prelim}

The {\em land} is given by a closed, bounded, and connected subset $L$ of the two-dimensional Euclidean plane $\mathbb{R}^2$.
The land~$L$ is to be divided among a set of agents
${\mathcal N} = [n]$, where $[k] := \{1,2,\dots,k\}$ for any positive integer $k$.
There is a prespecified family $U$ of \emph{usable} pieces, for example, squares or `fat' rectangles.
We require that usable pieces are connected, and that if a piece belongs to~$U$, then any piece that results from expanding or shrinking it by an arbitrary factor also belongs to $U$.
Each agent has an integrable \emph{value-density function} $f_i:L\rightarrow \mathbb{R}_{\ge 0}$.
For each $i\in\mathcal{N}$, agent $i$'s value for a piece of land~$Z$ is given by $v_i(Z) := \iint_Z f_i(x,y) dxdy$.
An \emph{instance} consists of the land, agents, density functions, and the \emph{separation parameter} $s
\geq 0$.

An \emph{allocation} of the land is given by a list $\mathbf{A}=(A_1,\dots,A_n)$, where each $A_i$ is a single usable piece of land allocated to agent~$i$.
We require allocations to be \emph{$s$-separated}, i.e., any two pieces $A_i$ and $A_j$ are separated by distance at least $s$, where distance is measured according to the $\ell_\infty$ norm:\footnote{Distances with respect to different norms are not far from one another.
For instance, the Euclidean $\ell_2$ distance differs from the $\ell_\infty$ distance by only a factor of $\sqrt{2}$: $d_\infty(\mathbf{p},\mathbf{q})\leq d_2(\mathbf{p},\mathbf{q})\leq \sqrt{2}\cdot d_\infty(\mathbf{p},\mathbf{q})$ for any points $\mathbf{p}$ and $\mathbf{q}$.}
\begin{align*}
d(A_i,A_j) 
= \inf_{(x,y)\in A_i, (x',y')\in A_j}\max\{|x-x'|,|y-y'|\}.
\end{align*}
\emph{Partitions} and \emph{$s$-separated partitions} are defined similarly, except that instead of a list $\mathbf{A}=(A_1,\dots,A_n)$, we have a set $\mathbf{P}=\{P_1,\dots,P_n\}$.\footnote{With a slight abuse of language, we use the term `partition' for a set of pairwise-disjoint pieces, even though some land remains unallocated due to the separation and geometric constraints.
}
Denote by $\Pi_n(s)$ the set of all $s$-separated partitions into $n$ pieces.

We now define the main fairness notion of interest in this paper.

\begin{definition}
\label{def:MMS}
For any positive integer $k$, the $1$-out-of-$k$ \emph{maximin share} of agent $i$ is defined as 
\[
\mms_i^{k,s} := \sup_{\mathbf{P}\in \Pi_k(s)}\min_{j\in [k]}v_i(P_j).
\]
We omit $s$ if it is clear from the context, and write $\mms_i^k$ instead of $\mms_i^{k,s}$.
Further, we refer to $\mms_i^n$ as $i$'s {\em maximin share}. 
\end{definition}

For instance, suppose that the land is a unit square, the family of usable pieces consists of all rectangles, and an agent has value~$1$ for the entire square, spread uniformly over the square.
Then, for any $s\le 1$, the agent's $1$-out-of-$2$ maximin share is $(1-s)/2$.

As with cake cutting \citep{ElkindSeSu22}, the supremum in \Cref{def:MMS} can be replaced by a maximum.\footnote{This requires defining a metric on the usable pieces and showing that $U$ is compact in that metric space---see Appendix C in the work of
\citet{SegalhaleviNiHa17}.}
An $s$-separated partition for which this maximum is attained is called a ($1$-out-of-$k$) \emph{maximin partition} of agent~$i$.

\section{A General Impossibility Result}
\label{sec:impossibility}

We begin by showing that, in contrast to one-dimensional cake cutting with separation \citep{ElkindSeSu22}, for land division there may be no allocation that guarantees to all agents their maximin share or even any multiplicative approximation of it.
This negative result does not depend on the geometric shape of the land or the usable pieces.

The proof of this result uses the following geometric lemma.

\begin{lemma}
\label{lem:points}
For every integer $n\geq 2$, 
there exist $n$ sets $S_1, \dots, S_n$ consisting of $n$ points each, 
such that 

(a) the $\ell_\infty$ distance between any two points in the same set is greater than $1$;

(b) if we pick one representative point from each set, then some two representatives are at $\ell_\infty$ distance less than $1$ apart.
\end{lemma}
\begin{proof}\footnote{
We are grateful to Alex Ravsky for the proof idea. 
}
We measure distance in $\ell_\infty$ norm unless specified otherwise.
Consider a regular $2n$-gon with side length $1$ (in $\ell_2$ norm)\footnote{Recall that in $\ell_2$ norm, the distance between points $(x,y)$ and $(x',y')$ is $\sqrt{(x-x')^2 + (y-y')^2}$.} and vertices $1,2,\dots,2n$ in this order, and position it on the plane so that none of its edges is parallel to either the $x$-axis or the $y$-axis; in particular, for $n=2$,
position it so that one of its vertices is its rightmost point and the two vertices adjacent to it are aligned vertically.

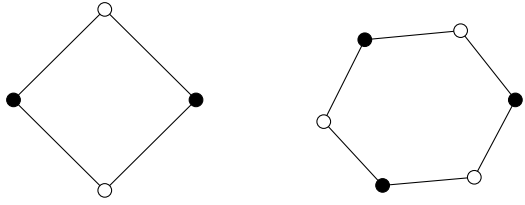
\begin{figure}
\centering
\begin{tikzpicture}[scale=0.6]

\draw (4,2) -- (2,4) -- (0,2) -- (2,0) -- (4,2);
\draw[fill=black] (4,2) circle [radius = 0.15];
\draw[fill=black] (0,2) circle [radius = 0.15];
\draw[fill=white] (2,4) circle [radius = 0.15];
\draw[fill=white] (2,0) circle [radius = 0.15];

\draw (11,2) -- (9.8,3.53) -- (7.7,3.33) -- (6.8,1.52) -- (8.09,0.11) -- (10.1,0.29) -- (11,2);
\draw[fill=black] (11,2) circle [radius = 0.15];
\draw[fill=white] (6.8,1.52) circle [radius = 0.15];
\draw[fill=white] (9.8,3.53) circle [radius = 0.15];
\draw[fill=black] (7.7,3.33) circle [radius = 0.15];
\draw[fill=black] (8.09,0.11) circle [radius = 0.15];
\draw[fill=white] (10.1,0.29) circle [radius = 0.15];
\end{tikzpicture}
\caption{Construction in the proof of \Cref{lem:points} for $n=2$ (left) and $n=3$ (right). Each of the first $n-1$ sets consists of the black vertices while the $n$-th set consists of the white vertices.
All sides are of length $1$.
}
\label{fig:impossibility}
\end{figure}
Let each of $S_1, \dots, S_{n-1}$ consist of all odd-numbered vertices, and let $S_n$ consist of all even-numbered vertices (see \Cref{fig:impossibility}).

If $n=2$, then the $\ell_\infty$ distance between two vertices from the same set is $\sqrt{2}$, while the $\ell_\infty$ distance between a vertex in $S_1$ and a vertex in $S_2$ is $1/\sqrt{2}$. 

Now, suppose $n\ge 3$, and consider any set of representatives.
If two representatives coincide, the distance between them is $0$.
Else, the representatives of $S_1, \dots, S_{n-1}$ are all distinct, so no matter which representative we choose for $S_n$, there will be two representatives that correspond to adjacent vertices of the $2n$-gon.
Since the edge connecting these two representatives is not axis-parallel, the $\ell_\infty$ distance between them is less than~$1$. 

It remains to argue that the distance between any two points from the same set is more than $1$. 
Fix two such points $a$ and $b$.
Note that for $n>2$ the angles of the regular $2n$-gon
are obtuse. 
Hence, if $a$ and $b$ have exactly one vertex of the $2n$-gon between them, by the law of cosines the $\ell_2$ distance between $a$ and $b$ is greater than $\sqrt{2}$; otherwise, the $\ell_2$ distance is even higher. 
It follows that the $\ell_\infty$ distance between $a$ and~$b$ is greater than $1$. 
Hence, the sets $S_1,\dots,S_n$ satisfy the claimed properties.
\end{proof}

\begin{theorem}
\label{thm:impossibility}
For every
family $U$ of usable pieces,
integer $n\geq 2$, separation parameter $s>0$, and real number $r>0$, there exists a land division instance with $n$ agents in which no $s$-separated allocation gives every agent $i$ a value of at least $r\cdot\emph{MMS}_i^{n,s}$.
\end{theorem}

\begin{proof}
Let $S_1, \dots, S_n$ be sets as in \Cref{lem:points}.
Let $\delta := \delta(n) > 0$ be such that the $\ell_\infty$ distance between any two points in the same set is at least $1+\delta$.
We construct value-density functions for the $n$ agents as follows. 

Pick a positive 
$\varepsilon< \frac{s\delta}{6+4\delta}$, and 
scale the axes so that whenever we pick one representative from each set, one of the pairwise distances among them is less than $s-4\varepsilon$,  while the distance between any two points in the same set is at least $(s-4\varepsilon)(1+\delta)$.
Since $6\varepsilon + 4\varepsilon\delta< s\delta$, we have $\delta > \frac{6\varepsilon}{s-4\varepsilon}$, and therefore $(s-4\varepsilon)(1+\delta) > (s-4\varepsilon)\cdot\frac{(s-4\varepsilon)+6\varepsilon}{s-4\varepsilon} = s+2\varepsilon$.
For each agent $i$, we create $n$ small `pools' (i.e., usable shape $S\in U$ such that any two points in the same pool are at most $\varepsilon$ apart) around points in~$S_i$, so that the agent values each pool at $1$. 
Since the distance between any two points in $S_i$ is at least $s+2\varepsilon$, the triangle inequality implies that the distance between any two pools is at least $s$, so the maximin share of each agent is $1$.

Suppose for contradiction that there exists an allocation in which all agents receive a positive fraction of their maximin share. 
This means that every agent $i$ receives a piece that overlaps at least one pool around a point in $S_i$. 
By construction of $S_1, \dots, S_n$, there are two agents $i$ and $j$ such that $i$ is allocated
a piece that overlaps a pool with center $x\in S_i$, $j$ is allocated a piece that
overlaps a pool with center $y\in S_j$, and the distance between $x$ and $y$ is less than $s-4\varepsilon$.
Since the diameter of each pool is at most~$\varepsilon$, 
the triangle inequality implies that the distance between $i$'s piece and $j$'s piece 
is at most $\varepsilon + (s-4\varepsilon) + \varepsilon < s$, i.e., 
these pieces are not $s$-separated.
This yields the desired contradiction. 
\end{proof}

\section{Ordinal Approximation}
\label{sec:approx}

Since no multiplicative approximation of the maximin share can be guaranteed, we instead consider an 
ordinal notion of approximation. That is, we ask if each agent can be guaranteed her
$1$-out-of-$k$ maximin share for some  $k > n$. 

While the negative result of Section~\ref{sec:impossibility} 
does not depend on geometric shape constraints, our positive 
results concern pieces that have a `nice' geometric shape. Specifically, in \Cref{sec:square-fat}, we first consider the scenario
where the set $U$ of usable pieces consists of `fat' axis-aligned rectangles, i.e., rectangles whose length-to-width
ratio is bounded by a constant (for example, if this constant is $1$, the set~$U$ consists
of axis-aligned squares). 
We then consider in \Cref{sec:arbitrary-rect} the more general setting where $U$ consists of all axis-aligned
rectangles. 
Placing such constraints on the shape of each piece is useful in land 
allocation settings, particularly in urban regions. Note that when we restrict the shape of the usable 
pieces to be a (fat) rectangle, in our definition of the maximin share we also only consider 
$s$-separated partitions in which each piece is a (fat) rectangle.

\subsection{Squares and fat rectangles}
\label{sec:square-fat}

Given a rectangle $R$, we denote the lengths 
of its long and short side by $\text{long}(R)$ 
and $\text{short}(R)$, respectively.
For any real number $r\ge 1$, a rectangle $R$ is called \emph{$r$-fat} if 
${\text{long}(R)}/{\text{short}(R)}\leq r$
\citep{AgarwalKaSh95,Katz97,SegalhaleviNiHa17,SegalhaleviHaAu20}.
In particular, a $1$-fat rectangle is a square.

In order to obtain maximin share guarantees, the high-level idea is to find a sufficiently large $k$ such 
that if we consider the agents' $1$-out-of-$k$ maximin partitions, then it is always possible to select a 
representative piece from each partition in such a way that these representatives are $s$-separated. This 
will ensure that, by allocating to each agent her representative, we obtain an allocation that is 
$s$-separated and in which agent $i$ receives value at least $\mms_i^{k,s}$. 
The following 
theorem shows that $k=(2\lceil r\rceil + 2)n - (3\lceil r\rceil + 2)$ suffices; 
for a square ($r=1$), this yields $k = 4n-5$. 
Note that our result does not place any assumptions on the shape of the land.

In what follows, we say that two pieces of land \emph{overlap} if their intersection 
has a positive area, and that they are \emph{disjoint} if their intersection has an area of zero.

\begin{definition}
Given $n$ sets of geometric objects, $S_1,\ldots,S_n$, where the objects in each set are pairwise-disjoint, a \emph{rainbow independent set} is a selection of a single representative object from each set, such that the $n$ representatives are pairwise-disjoint.
\end{definition}
Rainbow independent sets have been studied in abstract graphs \citep{aharoni2019RIS,lv2021rainbow,kim2022rainbow,ma2022rainbow}; here, we are interested in independent sets in the \emph{intersection graph}, that is, the graph in which the vertices are the rectangles and the edges connect pairs of overlapping rectangles.
Independent sets in intersection graphs have been studied for various families of shapes, particularly for axis-parallel rectangles \citep{agarwal1998label,chan2004note,chalermsook2009maximum,adamaszek2013approximation,chuzhoy2016approximating,chalermsook2021coloring,galvez20223,mitchell2022approximating},
but without the `rainbow' constraint.

Naturally, when the $n$ sets are larger, it is more likely that a rainbow independent set exists. We are interested in the smallest $k$ (in terms of $n$) such that a rainbow independent set exists for every collection of $n$ sets of size $k$ each.

\begin{definition}
Let $U$ be a family of geometric objects and $n \geq 2$ an integer. The \emph{$n$-th rainbow-independent-set number} of $U$ is the smallest integer $k$ such that, for every collection of $n$ sets, each consisting of $k$ pairwise-disjoint objects from the family $U$, a rainbow independent set exists.

For every real number $r\geq 1$ and integer $n\geq 2$, denote by $\risn(r,n)$ the $n$-th rainbow-independent-set number when $U$ is the family of $r$-fat rectangles.
\end{definition}

\begin{lemma}
\label{lem:rfat2}
For every real number $r\geq 1$, 
$\risn(r,2)=\ceil{r}+2$.
\end{lemma}
\begin{proof}
\footnote{We are grateful to an anonymous Mathematics Stack Exchange contributor for the proof idea.
}
We first show that
$\risn(r,2)\leq \lceil r\rceil+2$.
To this end, given $\lceil r\rceil+2$ red and $\lceil r\rceil+2$ blue $r$-fat rectangles, we have to show that at least one red and one blue rectangle do not overlap each other.

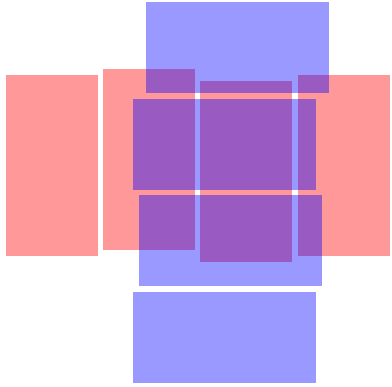
\begin{figure}
\centering
\begin{tikzpicture}[scale=0.8]
\path [fill=red, opacity=0.4] (1.5,2.5) rectangle (3,5.5);
\path [fill=red, opacity=0.4] (3.1,2.6) rectangle (4.6,5.6);
\path [fill=red, opacity=0.4] (4.7,2.4) rectangle (6.2,5.4);
\path [fill=red, opacity=0.4] (6.3,2.5) rectangle (7.8,5.5);
\path [fill=blue, opacity=0.4] (3.6,0.4) rectangle (6.6,1.9);
\path [fill=blue, opacity=0.4] (3.7,2) rectangle (6.7,3.5);
\path [fill=blue, opacity=0.4] (3.6,3.6) rectangle (6.6,5.1);
\path [fill=blue, opacity=0.4] (3.8,5.2) rectangle (6.8,6.7);
\end{tikzpicture}
\caption{Illustration of the proof of \Cref{lem:rfat2} for $r=2$.
The blue rectangles lie in order vertically and the red rectangles lie in order horizontally.}
\label{fig:r-fat}
\end{figure}

Assume for contradiction that every red rectangle overlaps all blue rectangles (and vice versa). 
Consider any two red rectangles.
If their projections on the $x$-axis overlap, then, since the rectangles are disjoint, 
their projections on the $y$-axis do not overlap.
Hence, we can rotate the plane so that one of the red rectangles (say, Red$_1$) lies entirely to the left of another red rectangle (say, Red$_2$).
Since all blue rectangles overlap both Red$_1$ and Red$_2$, their projections on the $x$-axis must contain the projection of the space between the right side of Red$_1$ and the left side of Red$_2$. 
Now, since the blue rectangles are pairwise disjoint, they must lie
in order vertically. Similar considerations imply that the red rectangles must lie in order horizontally; see Figure \ref{fig:r-fat}.

In each family, there are two extreme rectangles (top and bottom blue rectangle; leftmost and rightmost red rectangle); we call the other $\lceil r\rceil$ rectangles in each family \emph{middle rectangles}.
Let Red$_{\min}$ (resp., Blue$_{\min}$) be 
a red (resp., blue) rectangle with a smallest shorter side among all middle red (resp., blue) rectangles. 
Since every red rectangle overlaps all blue rectangles,
Red$_{\min}$ overlaps the $\lceil r\rceil$ middle blue rectangles and the space above and below them.
Hence, the longer side of Red$_{\min}$ is strictly longer than the shorter sides of these $\lceil r\rceil$ middle blue rectangles combined. 
This means that
\begin{align*}
\text{long}(\text{Red}_{\min})
>
\lceil r\rceil \cdot 
\text{short}(\text{Blue}_{\min}).
\end{align*}
By a similar argument, we have
\begin{align*}
\text{long}(\text{Blue}_{\min})
>
\lceil r\rceil \cdot 
\text{short}(\text{Red}_{\min}).
\end{align*}
Moreover, by the definition of fatness,
\begin{align*}
r \cdot 
\text{short}(\text{Blue}_{\min})
&\geq
\text{long}(\text{Blue}_{\min});
\\
r \cdot 
\text{short}(\text{Red}_{\min})
&\geq
\text{long}(\text{Red}_{\min}).
\end{align*}
Combining the four inequalities gives $\text{long}(\text{Red}_{\min}) > \text{long}(\text{Red}_{\min})$, a contradiction.
Hence, a rainbow independent set exists, so $\risn(r,2)\leq \lceil r\rceil+2$.

To show that $\risn(r,2) \geq \lceil r\rceil+2$,
we exhibit two sets of $r$-fat rectangles, each of which contains 
$\ceil{r}+1$ pairwise-disjoint rectangles, such that no two representatives are disjoint.
Consider the following two sets of rectangles.
The \emph{vertical} set contains the rectangles  $[i,i+1]\times[1-\varepsilon,~\ceil{r}+\varepsilon]$
for $i\in\{0,\ldots, \ceil{r}\}$, for some $\varepsilon > 0$.
Analogously, the \emph{horizontal} set contains the rectangles $[1-\varepsilon,~\ceil{r}+\varepsilon]\times [i,i+1]$
for $i\in\{0,\ldots, \ceil{r}\}$.
For all rectangles, the ratio between the two side lengths is $\ceil{r}+2\varepsilon-1$; when $\varepsilon$ is sufficiently small, this ratio is less than $r$, so all rectangles are $r$-fat. 
It can be verified that each horizontal rectangle overlaps all vertical rectangles.
\end{proof}

\begin{lemma}
\label{lem:rfatn}
For every real number $r\geq 1$ and integer $n\ge 2$,
\begin{align*}
\risn(r,n)
\leq 
(2\lceil r\rceil + 2)n - (3\lceil r\rceil  + 2).
\end{align*}
\end{lemma}
\begin{proof}
We first prove that 
$\risn(r, n+1)\leq \risn(r, n) + (2\lceil r\rceil + 2)$
for every $r\geq 1$ and $n\ge 2$. 
To this end, we claim that in any collection $\mathcal{C}$ of axis-aligned $r$-fat rectangles, there exists a rectangle in $\mathcal{C}$ that overlaps at most $2\lceil r\rceil + 2$ pairwise-disjoint rectangles from $\mathcal{C}$.
To prove this claim, let $Q_{\min}$ be a rectangle with a smallest shorter side among all rectangles in $\mathcal{C}$, and denote the length of this side by $w$. 
Mark all four corners of $Q_{\min}$.
Additionally, on each of its two longer sides,\footnote{If $Q_{\min}$ is a square, choose its `longer' sides arbitrarily.} mark $\lceil r\rceil - 1$ additional points so that the distance between any two consecutive marks (including the two marks at the corners) is at most $w$; this is possible because $Q_{\min}$ is $r$-fat, so $\text{long}(Q_{\min})\leq r\cdot w$.
The total number of marks is $2\lceil r\rceil + 2$.

Assume for contradiction that at least $2\lceil r\rceil + 3$ pairwise-disjoint rectangles in $\mathcal{C}$ overlap $Q_{\min}$; call the set of these pairwise-disjoint rectangles $\mathcal{D}$.
Let $a$ be the number of pairs $(Q,m)$ such that the mark $m\in Q_{\min}$ is contained in the interior of the rectangle $Q\in \mathcal{D}$, and let $b$ be the number of pairs $(Q,m)$ such that the mark $m\in Q_{\min}$ is on the border of the rectangle $Q\in \mathcal{D}$.
Any rectangle $Q\in \mathcal{C}$ that overlaps $Q_{\min}$ must either contain at least one mark of $Q_{\min}$ in its interior, or have at least two marks of $Q_{\min}$ on its border and overlap the area of $Q_{\min}$ adjacent to these marks.
Hence, we have $a + b/2 \ge 2\lceil r\rceil + 3$.
On the other hand, since the rectangles in $\mathcal{D}$ are pairwise disjoint, each mark of $Q_{\min}$ can only be contained in one rectangle from $\mathcal{D}$ (in which case it cannot be on the border of any other rectangle from~$\mathcal{D}$) or on the border of one or two rectangles from $\mathcal{D}$ (in which case it cannot be contained in any rectangle from $\mathcal{D}$).
Since the number of marks is $2\lceil r\rceil + 2$, it holds that $a + b/2 \le 2\lceil r\rceil + 2$.
This yields a contradiction and establishes the claim.

Now, suppose we are given $n+1$ sets, each of which contains $\risn(r, n) + (2\lceil r\rceil + 2)$ pairwise-disjoint rectangles.
By our claim, among all of these rectangles, there exists a rectangle~$R$ that overlaps at most $2\lceil r\rceil + 2$ pairwise-disjoint rectangles.
Remove $R$, the set that contains $R$, and all rectangles from the remaining $n$ sets that overlap $R$.
Each of the remaining $n$ sets still contains at least $\risn(r, n)$ rectangles, so it is possible to choose pairwise-disjoint representatives from these sets.
Moreover, these representatives do not overlap $R$, so we may choose $R$ as the representative of its set.
It follows that $\risn(r, n+1)\leq \risn(r, n) + (2\lceil r\rceil + 2)$.

Combining this inequality with \Cref{lem:rfat2} gives
\begin{align*}
\risn(r,n) &
\leq 
(2\lceil r\rceil + 2)(n-2) + (\lceil r\rceil + 2)
= 
(2\lceil r\rceil + 2)n - (3\lceil r\rceil  + 2),
\end{align*}
as claimed.
\end{proof}
For the special case of squares we get $\risn(1,n)\leq 4n-5$.
Constructions similar to those in the proof of \Cref{thm:impossibility} show that\footnote{Specifically, we place same-size axis-parallel squares centered at each vertex of the $2n$-gon in the proof of \Cref{thm:impossibility}, with the side length of the squares chosen so that any pair of squares centered at adjacent vertices of the $2n$-gon overlap, while any pair of squares at non-adjacent vertices do not.} $\risn(r,n)\geq n+1$ for all~$r$.
Thus the 
bound $4n-5$ for squares is optimal for $n=2$, but may be suboptimal for $n\geq 
3$. 
\begin{open}
What is the exact value of $\risn(r,n)$ for every $r\geq 1$ and $n\geq 3$?
In particular, what is the $n$-th rainbow independent set number of squares for $n\geq 3$?
\end{open}

We can finally return to our original problem.

\begin{theorem}
\label{thm:r-fat}
Let $r\ge 1$ be a real number.
For every land division instance with $n\ge 2$ agents, separation parameter $s$, and $U$ being the set of axis-aligned $r$-fat rectangles, 
there exists an $s$-separated allocation in which every agent $i$ receives value at least 
$\emph{MMS}_i^{k, s}$, where $k := (2\lceil r\rceil + 2)n - (3\lceil r\rceil  + 2)$.
\end{theorem}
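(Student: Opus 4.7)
The plan is to follow the high-level strategy stated just before the theorem. For every agent $i$, let $\mathcal{P}^i=\{P^i_1,\dots,P^i_k\}$ be a $1$-out-of-$k$ maximin partition of agent $i$: $k$ pairwise $s$-separated $r$-fat rectangles, each of value at least $\mms_i^{k,s}$. The goal is to pick a representative $P^i\in\mathcal{P}^i$ for each $i$ so that the $n$ selected rectangles are pairwise $s$-separated; allocating agent $i$ her representative then yields the desired allocation.

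The technical core is a packing lemma bounding how many members of an $s$-separated family of $r$-fat rectangles can lie near a single $r$-fat rectangle. Concretely, if $R$ is $r$-fat of width $w$ and $\mathcal{Q}$ is an $s$-separated family of $r$-fat rectangles each of width at least $w$, then at most $2\lceil r\rceil+2$ members of $\mathcal{Q}$ are at $\ell_\infty$-distance less than $s$ from $R$. I would prove this by a Minkowski inflation: inflate $R$ and each $Q\in\mathcal{Q}$ by $s/2$ on all sides. The inflated $Q$'s are then pairwise interior-disjoint, and each has short side at least $w+s$, matching the short-side length of the inflated $R$. Hence every inflated $Q$ overlapping the inflated $R$ must cross either the top or the bottom edge-line of the latter's short direction (its short side being too large to fit strictly inside that length). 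Restricted to one edge-line, the crossings form pairwise interior-disjoint $1$D intervals of length at least $w+s$ overlapping a reference segment of length at most $r(w+s)$, and a direct packing count yields at most $\lceil r\rceil+1$ such intervals per line. Summing over the two edge-lines gives the stated bound. I would also prove a one-sided refinement: if all $Q\in\mathcal{Q}$ lie on one side of $R$ with respect to $R$'s short direction (say, above $R$ when $R$ is horizontal), then only one edge-line contributes, tightening the bound to $\lceil r\rceil+1$.

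To construct the representatives, I would combine the lemma with a carefully ordered greedy. First, pick $P^*$ to be an \emph{extremal} piece across all $nk$ pieces---for example, a piece of minimum width whose short direction is aligned so that $P^*$ is extremal in the perpendicular direction (lowest bottom $y$-coordinate if $P^*$ is horizontal, leftmost $x$-coordinate if vertical)---and allocate $P^*$ to its agent. The one-sided refinement then bounds the number of pieces of every other $\mathcal{P}^i$ blocked by $P^*$ by $\lceil r\rceil+1$. Second, process the remaining pieces in non-decreasing order of width and greedily allocate any piece whose agent is still unallocated and which is $s$-separated from everything already allocated. If an agent $i$ ends up unallocated, every $P^i_j$ was blocked when considered, and each blocker other than $P^*$ has width at most $w(P^i_j)$ (because it was allocated earlier in the width order), so the full lemma bounds the blocking from each of those $n-2$ pieces by $2\lceil r\rceil+2$. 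Summing: the number of blocked pieces in $\mathcal{P}^i$ is at most $(\lceil r\rceil+1)+(n-2)(2\lceil r\rceil+2)=(\lceil r\rceil+1)(2n-3)=k-1$, contradicting that all $k$ of them are blocked.

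The main obstacle is to ensure both hypotheses of the lemma hold simultaneously for $P^*$. The width hypothesis requires $P^*$ to be no wider than the pieces it blocks, which is fine when $P^*$ has minimum width globally, but the one-sided hypothesis requires $P^*$ to also be extremal in the direction perpendicular to its short axis, and the minimum-width piece need not be extremal. I expect to resolve this either by strengthening the lemma so that a one-sided refinement holds whenever $P^*$ is extremal in \emph{any} of the four axis-aligned directions (since an $r$-fat rectangle has four candidate extremalities, by pigeonhole one of them should be compatible with the arrangement), or by combining the global minimum-width criterion with a tie-breaking rule that secures the required extremality. Verifying that the counting still telescopes to exactly $k-1$ under this combined criterion is the most delicate part, and is where the specific constant $(3\lceil r\rceil+2)$ in the theorem arises.
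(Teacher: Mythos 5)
Your overall reduction is the paper's: inflating every rectangle by $s/2$ on all sides (the paper's $\textsc{Wrap}$ operation) converts $s$-separation into disjointness while preserving $r$-fatness, and the problem becomes selecting pairwise-disjoint representatives from $n$ families of pairwise-disjoint $r$-fat rectangles. Your core packing bound is also essentially the paper's key claim: a minimum-width rectangle can meet at most $2\lceil r\rceil+2$ pairwise-disjoint $r$-fat rectangles of no smaller width (the paper proves this by marking $2\lceil r\rceil+2$ boundary points; your interval count on the two short-direction edge-lines, $\lceil r\rceil+1$ per line, is an equivalent argument). Your width-ordered greedy is likewise equivalent to the paper's peel-off recursion (allocate the globally narrowest piece, discard the at most $2\lceil r\rceil+2$ blocked pieces in each other family, recurse).

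The genuine gap is in how you reach the stated constant. As you compute, $n-1$ blockers at $2\lceil r\rceil+2$ each would only give $k=(2\lceil r\rceil+2)n-(2\lceil r\rceil+1)$; you must save an additional $\lceil r\rceil+1$ somewhere, and your proposed saving---a one-sided bound of $\lceil r\rceil+1$ for the first piece $P^*$---requires $P^*$ to be simultaneously of globally minimum width (so the edge-line argument applies to everything it blocks) and extremal in the direction perpendicular to its short axis (so only one edge-line is crossed). No single rectangle need satisfy both: the narrowest piece can sit in the middle of the arrangement while all four extremal pieces are wide, and neither of your two suggested repairs is worked out (extremality in the \emph{long} direction does not force crossings of a single line, since overlapping pieces can have horizontal extent far smaller than $\text{long}(P^*)$; and dropping the minimum-width requirement lets arbitrarily many narrow pieces crowd one edge-line). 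The paper obtains the saving by an entirely different, global argument reserved for the last two families: it shows $\textsc{NFat}(r,2)\le\lceil r\rceil+2$ by assuming every rectangle of one family overlaps every rectangle of the other, deducing that one family is ordered horizontally and the other vertically, and deriving $\text{long}(\text{Red}_{\min})>\lceil r\rceil\cdot\text{short}(\text{Blue}_{\min})$ together with the symmetric inequality, which contradicts $r$-fatness of the ``middle'' rectangles. That two-family base case is the missing ingredient; as written, your argument proves the theorem only with the weaker constant $k=(2\lceil r\rceil+2)n-(2\lceil r\rceil+1)$.
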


\begin{proof}
Given the separation parameter $s$, for every rectangle $Q$ with side lengths $u$ and~$v$, define $\wrap{Q}{s}$ to be the rectangle with side lengths $u + s$ and $v + s$ and the same center as $Q$ (i.e., wrap $Q$ with a `rectangle ring' of width $s/2$).
Observe that if $Q$ is $r$-fat, then so is $\wrap{Q}{s}$.
Moreover, two rectangles $Q_1$ and $Q_2$ are $s$-separated if and only if $\wrap{Q_1}{s}$ and $\wrap{Q_2}{s}$ do not overlap (see Figure~\ref{fig:square-wrap}).

\begin{figure}
\centering
\begin{tikzpicture}[scale=0.9]
\path [fill=blue!30] (0,0) rectangle (3,3);
\path [fill=blue] (0.5,0.5) rectangle (2.5,2.5);
\path [fill=red!30] (3.1,0.3) rectangle (6.1,2.3);
\path [fill=red] (3.6,0.8) rectangle (5.6,1.8);
\end{tikzpicture}
\caption{The dark rectangles are $s$-separated (in the $\ell_{\infty}$ metric) if and only if the light rectangles wrapping them with a rectangle ring of width $s/2$ are disjoint.}
\label{fig:square-wrap}
\end{figure}
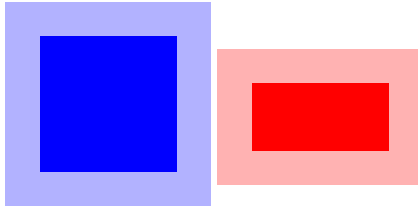

Now, given an $n$-agent instance, we ask each agent to produce a $1$-out-of-$k$ maximin partition: 
this is a set of $k$ axis-aligned rectangles that are $s$-separated. 
Then, we replace each rectangle $Q$ with $\wrap{Q}{s}$, so each agent now has a set of 
$k$ pairwise-disjoint rectangles.
\Cref{lem:rfatn} implies that
$k\ge \risn(r,n)$, so 
the collection of partitions admits a rainbow independent set.
Denote the elements of this set by $\wrap{Q_1}{s},\allowbreak \wrap{Q_2}{s},\allowbreak \dots,\allowbreak \wrap{Q_n}{s}$, 
where $\wrap{Q_i}{s}$ belongs to agent $i$'s partition. 
We allocate the rectangle $Q_i$ to agent $i$.
The rectangles $Q_1,\dots,Q_n$ are $s$-separated, and every agent $i$ receives value at least $\mms_i^{k, s}$, as desired.
\end{proof}

\subsection{Arbitrary rectangles}
\label{sec:arbitrary-rect}

Next, we allow the allocated pieces to be arbitrary axis-aligned rectangles, and assume that the land itself is also an axis-aligned rectangle.
Without loss of generality, we suppose further that the land is a square (otherwise, for positive results, a rectangular land can be completed to a square by attaching to it a rectangle that all agents value at~$0$).
We scale the axes so that the land is the unit square $[0,1]\times[0,1]$.

The arbitrary rectangle case differs from the fat rectangle case in two respects.
First, \emph{without} the separation requirement, the arbitrary rectangle case is much \emph{easier}:
the land can be projected onto a one-dimensional interval, for which full proportionality, and hence $\mms_i^n$,
can be achieved \citep{DubinsSp61}.
In contrast, \emph{with} the separation requirement, the arbitrary rectangle case is much \emph{harder}: 
the rainbow-independent-set technique of \Cref{thm:r-fat} 
does not yield a meaningful bound for arbitrary rectangles.
Indeed, \Cref{lem:rfat2} implies that the rainbow-independent-set number of rectangles is infinite even for $n=2$
(that is, 
even for an arbitrarily large $k$, there exist two size-$k$ sets 
of pairwise-disjoint rectangles such that no two representatives are disjoint).

A priori, for $n\geq 2$, it is not even clear that there is a finite $\nrect(n)$ such that an 
$\mms^{\nrect(n)}$ allocation among $n$ agents always exists.
We shall prove that $\nrect(n)$ is indeed finite for any $n\geq 2$, 
and derive improved upper bounds on $\nrect(n)$ for small values of $n$.
Towards this goal, we develop some new tools.

In what follows, for each agent we fix a $1$-out-of-$k$ maximin 
partition; see \Cref{fig:rectangle-2-3} for some examples of such partitions when $k = 3$.
For all $i\in\mathcal N$,
we assume without loss of generality that $\mms_i^k = 1$, and that $i$'s 
value is $0$ outside the $k$ rectangles in her maximin partition 
(the latter value being positive can only make it easier to satisfy the agent). 
We may also assume that $i$'s value for each rectangle in her partition is exactly~$1$.
Hence each agent has a value of exactly $k$ for the land and should get an axis-aligned rectangle worth at least $1$. 

We refer to the $k$ rectangles in an agent's (fixed) maximin partition 
as \emph{MMS-rectangles}; every rectangular piece of land that is worth 
at least $1$ to the agent is called a \emph{value-$1^+$ rectangle}. 
Due to our normalization, every MMS-rectangle is a value-$1^+$ rectangle, but the converse is not necessarily true.

\begin{figure*}
\centering
\begin{tikzpicture}[scale=0.9]

\node at (1.5,1.5) {\footnotesize (a)};
\draw (0,2) rectangle +(3,3);
\draw[fill=green] (0,2) rectangle +(1,3);
\draw[fill=green] (1.1,2) rectangle +(0.5,3);
\draw[fill=green] (1.7,2) rectangle +(1.3,3);

\node at (5.7,1.5) {\footnotesize (b)};
\draw (4.2,2) rectangle +(3,3);
\draw[fill=green] (4.2,2) rectangle +(3,1.6);
\draw[fill=green] (4.2,3.7) rectangle +(2,1.3);
\draw[fill=green] (6.3,3.7) rectangle +(0.9,1.3);

\node at (9.9,1.5) {\footnotesize (c)};
\draw (8.4,2) rectangle +(3,3);
\draw[fill=green] (8.4,2) rectangle +(3,1);
\draw[fill=green] (8.4,3.1) rectangle +(3,1.2);
\draw[fill=green] (8.4,4.4) rectangle +(3,0.6);

\node at (14.1,1.5) {\footnotesize (d)};
\draw (12.6,2) rectangle +(3,3);
\draw[fill=green] (12.6,2) rectangle +(1.6,2);
\draw[fill=green] (12.6,4.1) rectangle +(1.6,0.9);
\draw[fill=green] (14.3,2) rectangle +(1.3,3);

\end{tikzpicture}
\caption{Four partitions of a rectangular land into three axis-aligned rectangles. Each of these partitions can be a $1$-out-of-$3$ maximin partition of an agent.  
The white space between the green rectangles has thickness $s$.}
\label{fig:rectangle-2-3}
\end{figure*}
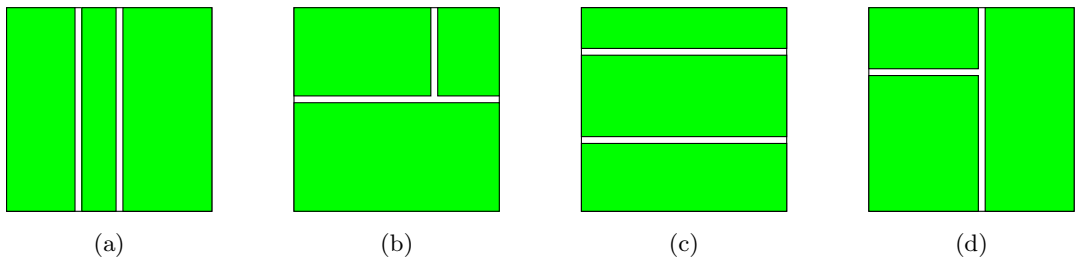

\begin{definition}
Consider an agent with a fixed $1$-out-of-$k$ maximin partition, and integers $p,q\geq 1$.
A \emph{vertical $p$:$q$-rectangle cut} is a rectangular strip
of height $1$ and width $s$ that has at least 
$p$ whole MMS-rectangles on its left and at least $q$ whole MMS-rectangles on its right.
A \emph{vertical $p$-rectangle stack} is 
a sequence of $p$ value-$1^+$ rectangles
such that each consecutive pair is separated by a vertical distance of at least~$s$.
\end{definition}
In Figure \ref{fig:rectangle-2-3}(a), the left vertical cut is a vertical $1$:$2$-rectangle cut and the right one is a vertical $2$:$1$-rectangle cut.
In Figure \ref{fig:rectangle-2-3}(c), there is a vertical $3$-rectangle stack.
In Figure \ref{fig:rectangle-2-3}(d), the vertical cut is a $2$:$1$-rectangle cut, and there is a vertical $2$-rectangle stack.

The following lemma shows the existence of either a rectangle cut or a rectangle stack with appropriate parameters.

\begin{lemma}
\label{lem:cuts}
Fix an agent and a $1$-out-of-$k$ maximin partition of this agent.
For any integers $1\leq p,q \leq k$ with $p+q \leq k+1$, the agent
has a vertical $p$:$q$-rectangle cut or a vertical $(k-p-q+2)$-rectangle stack.
\end{lemma}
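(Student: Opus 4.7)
The plan is a sweeping argument. Let $x^*$ be the smallest real number such that at least $p$ MMS-rectangles lie entirely in the closed halfplane $\{x' \le x^*\}$; equivalently, $x^*$ is the $p$-th smallest value (with multiplicity) among the right-edge $x$-coordinates of the MMS-rectangles. Examine the vertical strip $[x^*, x^* + s] \times [0, 1]$. If at least $q$ MMS-rectangles have their left edges at or beyond $x^* + s$, this strip is the desired vertical $p$:$q$-rectangle cut and we are done.

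Suppose instead that fewer than $q$ MMS-rectangles lie wholly to the right of the strip. Let $A' := \{i : x_i^R = x^*\}$ and $M := \{i : x_i^R > x^* \text{ and } x_i^L < x^* + s\}$, where $[x_i^L, x_i^R]$ denotes the $x$-projection of the $i$-th MMS-rectangle. Combining this hypothesis with the fact that fewer than $p$ MMS-rectangles have $x_i^R < x^*$ (by the choice of $x^*$), a short counting argument yields $|A' \cup M| \geq k - p - q + 2$.

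The crux is then the observation that any two distinct rectangles in $A' \cup M$ are vertically $s$-separated. Indeed, each such pair has horizontal distance strictly less than $s$: if their $x$-projections overlap the distance is $0$, and otherwise both rectangles satisfy $x_i^L < x^* + s$ and $x_i^R \geq x^*$, which pins the horizontal gap to the interval $[0, s)$. Since all MMS-rectangles are pairwise $s$-separated in the $\ell_\infty$ metric, a horizontal distance below $s$ forces the vertical distance to be at least $s$. Sorting the rectangles in $A' \cup M$ by $y$-coordinate therefore produces the required vertical $(k - p - q + 2)$-rectangle stack of value-$1$ rectangles.

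The main subtlety will be accommodating ties in the $x$-coordinates. If several MMS-rectangles share the value $x_i^R = x^*$, it is essential to include all of them in the stack via the full set $A'$, rather than a single representative, so that any overshoot beyond $p$ in the number of MMS-rectangles with $x_i^R \leq x^*$ is absorbed into the stack count. Apart from this bookkeeping, the entire geometric content of the proof lies in the single observation that no two rectangles in $A' \cup M$ can have horizontal distance at least $s$.
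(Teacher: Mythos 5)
Your proposal is correct and is essentially the paper's own argument: both are sweep arguments that place a width-$s$ strip at the $p$-th rectangle's right edge and observe that all rectangles meeting (or abutting) the strip have pairwise horizontal distance below $s$, hence must be vertically $s$-separated. The only difference is presentational --- you handle ties explicitly via the set $A'$, whereas the paper perturbs the knife ``slightly to the left'' to achieve the same count of at least $k-p-q+2$ stacked rectangles.
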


\begin{proof}
Starting from the left end of the cake, move a vertical knife of width $s$ to the right.
Stop the knife at the first point where there are at least $p$ whole MMS-rectangles to its left---the knife may need to move outside the cake in order for this to happen, as in Figure~\ref{fig:rectangle-2-3}(c) for any $p$. 
Consider two cases.

\emph{Case 1}: There are at least $q$ whole MMS-rectangles to the right of the knife.
Then, the knife indicates a vertical $p$:$q$-rectangle cut.
This is the case when $p=q=1$ in Figures \ref{fig:rectangle-2-3}(a), (b), and (d).

\emph{Case 2}: There are at most $q-1$ whole MMS-rectangles to the right of the knife. Then, by moving the knife slightly to the left, we obtain a cut for which there are at most $p-1$ MMS-rectangles entirely to its left, and at most $q-1$ MMS-rectangles entirely to its right. Therefore, at least $k-p-q+2$ MMS-rectangles must intersect the knife itself.
Since the knife width is $s$, these rectangles must lie in order vertically, with a vertical distance of at least~$s$ between consecutive rectangles.
Hence, they form a vertical $(k-p-q+2)$-rectangle stack.
This is the case when $p=q=1$ in Figure~\ref{fig:rectangle-2-3}(c).
\end{proof}

In the remainder of this section, 
given $y,y' \in [0,1]$ with $y\le y'$, 
we denote by $R(y,y')$ the rectangle $[0,1]\times [y,y']$.
We now prove a positive result for two agents matching the lower bound implied by \Cref{thm:impossibility}.

\begin{theorem}
\label{thm:rectangle-2}
For any land division instance with a rectangular land and $n=2$ agents, 
there exists an $s$-separated allocation 
in which each agent $i$ receives an axis-aligned rectangle of value at least $\emph{MMS}_i^3$.
\end{theorem}

\begin{proof} Call the agents Alice and Bob. Take a $1$-out-of-$3$ maximin partition of each agent, and consider two cases.

\emph{Case 1}: Both agents have a vertical $1$:$1$-rectangle cut. 
Assume without loss of generality that Alice's cut lies further to the left.
Give the rectangle to its left to Alice and the one to its right to Bob.
Then each agent receives an MMS-rectangle.

\emph{Case 2}: At least one agent, say Alice, has no vertical $1$:$1$-rectangle cut. By Lemma~\ref{lem:cuts}, she has a vertical $3$-rectangle stack, as in Figure~\ref{fig:rectangle-2-3}(c).
For the $i$-th rectangle in this stack (counting from the bottom), denote the $y$-coordinates of its top and bottom sides by $t_i$ and $b_i$, respectively.
Note that $t_1+s\leq b_2$ and $t_2+s\leq b_3$.

If Bob's value for $R(0,t_2)$ is at least $1$, then give $R(0,t_2)$ to Bob and $R(b_3,1)$ to Alice.
Otherwise, Bob values $R(0,t_2)$ less than $1$,
so his value for $R(b_2,1)$ is more than $2$; in this case, give $R(b_2,1)$ to Bob and 
$R(0,t_1)$ to Alice.
In both cases Alice's value is $1$ and the pieces are $s$-separated.
\end{proof}

For $n\geq 3$ agents, the analysis becomes more complicated. 
As in classic cake-cutting algorithms (e.g., \citep{DubinsSp61}), we would like to proceed recursively: give one agent a rectangle worth at least $1$, and divide the rest of the land among the remaining $n-1$ agents. In particular, for $n=3$, after allocating a piece to one agent, we would need to show that, for each of the remaining two agents, the rest of the land is worth at least $3$, so that we can apply Theorem~\ref{thm:rectangle-2}. In fact, to apply Theorem \ref{thm:rectangle-2}, we need an even stronger condition:
each agent should have three $s$-separated value-$1^+$ rectangles. 
However, the recursion step might yield a remainder land made of many chunks of such rectangles, 
where each chunk is worth less than $1$. 
We therefore need to adapt our definitions and lemmas accordingly.

\begin{definition}
For integers $p,q\geq 1$, a \emph{vertical $p$:$q$-value cut} of an agent is a rectangular strip of width $s$ such that the agent values the land on its left at least $p$ and the land on its right at least~$q$. 
\end{definition}

Note that for any $p$ and $q$, every vertical $p$:$q$-rectangle cut is also a vertical $p$:$q$-value cut, but the converse is not necessarily true.

For the following lemma, it is important that the agent's value function is normalized as explained earlier, i.e., the value of each MMS-rectangle is $1$ and the value outside the MMS-rectangles is $0$.
A \emph{land-subset} is a subset of the land after some pieces have possibly been allocated to other agents.

\begin{lemma}
\label{lem:cuts-fraction}
Consider an agent with a fixed $1$-out-of-$k$ maximin partition of the land, 
who takes part in a division of a rectangular land-subset.
Let $V\leq k$ be the agent's value for the land-subset.
For any integers $p,q \geq 1$ with $p+q \leq V$,
the agent has either a vertical $p$:$q$-value cut or a vertical 
$\lceil(\lfloor V\rfloor-p-q)/2\rceil$-rectangle stack.
\end{lemma}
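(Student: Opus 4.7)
The plan is to adapt the sliding-knife argument from the proof of Lemma~\ref{lem:cuts}, but to measure \emph{value} instead of counting whole MMS-rectangles.  I would slide a vertical strip of width $s$ across the rectangular land-subset from left to right, writing $L(x)$ and $R(x)$ for the agent's value of the portion of the land-subset strictly to the left and to the right of the strip, respectively.  Since the density is integrable, both $L$ and $R$ are continuous in $x$, so there is a first position $x_0$ at which $L(x_0)=p$.  If $R(x_0)\ge q$ at this position, the strip is a vertical $p$:$q$-value cut and we are done.

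Otherwise $R(x_0)<q$, and the value inside the strip equals $V-p-R(x_0) > V-p-q$.  Let $M_1,\dots,M_N$ be the MMS-rectangles whose intersection with the strip and the land-subset has positive value, ordered from bottom to top.  As in the proof of Lemma~\ref{lem:cuts}, any two of them meet a common width-$s$ strip while being $s$-separated in $\ell_\infty$, which forces them to be pairwise $s$-separated vertically.  Writing $v_j := v_i(M_j \cap \text{strip} \cap \text{land-subset}) \in (0,1]$, we have $\sum_j v_j > V-p-q$, and the vertical separation together with $v_j>0$ implies that only $M_1$ may stick out below the land-subset and only $M_N$ above it.

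To build the stack I would sweep through $M_1,\dots,M_N$ in order and greedily collect them into consecutive groups whose $v_j$-sum first reaches $1$.  For each completed group I output the rectangle $[x_0,x_0+s]\times[\alpha_i,\beta_i]$, where $\alpha_i$ is the bottom of the first MMS-rectangle of the group (clipped to the bottom of the land-subset for the first group only) and $\beta_i$ is chosen inside the last MMS-rectangle of the group so that the agent's value of the rectangle is exactly $1$.  Since $\beta_i$ lies below the top of the last MMS-rectangle of group $i$ while $\alpha_{i+1}$ equals the bottom of the first MMS-rectangle of group $i+1$, which is at least $s$ higher by the MMS $s$-separation, consecutive output rectangles are vertically $s$-separated; and each lies inside the land-subset by construction.

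The main obstacle, and the source of the factor-of-two loss relative to Lemma~\ref{lem:cuts}, is the lower bound on the number of groups $k$.  An individual $v_j$ can be arbitrarily small (when the corresponding MMS-rectangle mostly sticks out of the strip), so a group sum can grow to almost $2$ before closing: the running sum just before adding the last $v_j$ is strictly less than $1$ and the last $v_j$ is at most $1$, so every closed group has sum in $[1,2)$, while the leftover after the final closed group is less than $1$.  Hence $\sum_j v_j < 2k+1$, and combined with $\sum_j v_j > V-p-q \ge \lfloor V\rfloor - p - q$ this yields $k > (\lfloor V\rfloor - p - q - 1)/2$; a short parity check on the integer $\lfloor V\rfloor - p - q$ then gives $k \ge \lceil(\lfloor V\rfloor - p - q)/2\rceil$, as required.
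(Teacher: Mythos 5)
Your proposal is correct and follows essentially the same route as the paper's proof: a continuity/sliding-knife argument for the $p$:$q$-value cut, and in the failure case a greedy bottom-to-top grouping of the value chunks inside the width-$s$ strip, with each closed group having value in $[1,2)$ and leftover less than $1$, yielding the same bound $2m+1 > V-p-q$ and hence $m \ge \lceil(\lfloor V\rfloor - p - q)/2\rceil$. The only difference is that you spell out the construction of the stack rectangles (clipping to the land-subset and choosing the top boundary by continuity) in more detail than the paper does.
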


\begin{proof}
Consider a vertical knife of width $s$ moving over the land-subset from left to right.
Denote by $v_L$ and $v_R$ the value of the land-subset to the left and to the right of the knife, respectively. 
As the knife moves, $v_L$ increases continuously from $0$ to $V$. 
Stop the knife at the first moment when $v_L = p$. 
There are two cases:

\emph{Case 1}:  $v_R \geq q$. Then, the knife indicates a vertical $p$:$q$-value cut.

\emph{Case 2}:  $v_R < q$. By moving the knife slightly to the left, we obtain a cut with $v_L<p$ and $v_R<q$. 
The value covered by the knife itself must then be more than $V-p-q$.
By the normalization assumption, all value comes from MMS-rectangles, and the value of each MMS-rectangle is $1$.
Hence, the value covered by the knife is made of `chunks' of value at most $1$ each, with a distance of at least $s$ between any two chunks.
Since the knife width is~$s$, these chunks must lie in order vertically, with a vertical distance of at least~$s$ between consecutive chunks.
We now show that these chunks can be grouped into $m$ $s$-separated rectangles of value at least $1$ each, where $m\geq \lceil(\lfloor V\rfloor-p-q)/2\rceil$.

Starting from the bottom, collect the chunks until the total value collected so far is at least $1$, and form a rectangle containing the collected chunks. Keep collecting chunks and forming rectangles in this way, until the total value of the remaining chunks is less than $1$. Let $m$ be the number of rectangles created in this manner. Since the value of each chunk is at most $1$, the value of each rectangle is less than $2$, while the remaining value is less than $1$. Therefore, the total value covered by the knife is less than $2 m + 1$. 
Hence, $2 m + 1 > V-p-q$. Since $m$ is an integer, this implies $2 m\geq \lfloor V\rfloor-p-q$ and therefore $m\geq \lceil(\lfloor V\rfloor-p-q)/2\rceil$.
\end{proof}

The following lemma establishes a weaker bound than \Cref{thm:rectangle-2} 
does; however, it applies to an arbitrary land-subset, and hence 
(unlike \Cref{thm:rectangle-2}) can
be used as part of a recursive argument. Its proof is essentially
identical to the proof of \Cref{thm:rectangle-2}: the only difference
is that we first look for a vertical $1$:$1$-value cut, 
and if we fail to find one, we invoke Lemma~\ref{lem:cuts-fraction}
to establish the existence of a vertical $3$-rectangle stack.

\begin{lemma}
\label{lem:rectangle-2-fraction}
Consider a rectangular land-subset and $n=2$ agents
who value it at least~$7$ each.
There is an $s$-separated allocation 
in which each agent receives an axis-aligned rectangle of value at least $1$.
\end{lemma}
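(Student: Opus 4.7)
The plan is to follow the proof of Theorem~\ref{thm:rectangle-2} almost verbatim, substituting ``value cut'' for ``rectangle cut'' throughout and appealing to Lemma~\ref{lem:cuts-fraction} wherever the original invokes Lemma~\ref{lem:cuts}. Call the two agents Alice and Bob, and let $V \ge 7$ denote each agent's value for the land-subset. Applying Lemma~\ref{lem:cuts-fraction} with $p=q=1$ to each agent guarantees that she has either a vertical $1$:$1$-value cut or a vertical $\lceil (\lfloor V\rfloor - 2)/2\rceil$-rectangle stack; the threshold $V \ge 7$ is chosen precisely so that this stack length is at least $3$.

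In the first case---both agents possess a vertical $1$:$1$-value cut---I assume without loss of generality that Alice's cut lies to the left of Bob's; hand Alice the rectangular region left of her own cut and Bob the rectangular region right of Alice's cut. Alice's piece has value at least $1$ by definition of a value cut, and Bob's piece has value at least $1$ because it contains the region right of Bob's own cut. The two pieces are $s$-separated by Alice's cut, which is itself a strip of width $s$. In the second case, some agent---say Alice---lacks a $1$:$1$-value cut and therefore owns a vertical $3$-rectangle stack; let $b_i$ and $t_i$ denote the bottom and top $y$-coordinates of the $i$-th stack rectangle, so that $t_1+s \le b_2$ and $t_2+s \le b_3$. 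If Bob values $R(0,t_2)$ at least $1$, I give $R(0,t_2)$ to Bob and $R(b_3,1)$ to Alice. Otherwise Bob values $R(0,t_2)$ less than $1$, so his value for $R(t_2,1)$ exceeds $V-1 \ge 6$; since $R(b_2,1) \supseteq R(t_2,1)$, I give $R(b_2,1)$ to Bob and $R(0,t_1)$ to Alice. In every sub-case Alice's piece contains a full rectangle of her stack (value $\ge 1$), Bob's piece has value at least $1$, and the stack inequalities $t_1+s\le b_2$ and $t_2+s\le b_3$ ensure the two pieces are $s$-separated.

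I do not anticipate a genuinely new obstacle: the constant $7$ is the only quantity that demands care, and it falls out of the requirement $\lceil(\lfloor V\rfloor-2)/2\rceil \ge 3$ in Lemma~\ref{lem:cuts-fraction}. Everything else is a mechanical adaptation of Theorem~\ref{thm:rectangle-2}, with the allocated rectangles automatically remaining inside the land-subset since the latter is itself an axes-aligned rectangle.
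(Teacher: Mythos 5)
Your proposal is correct and matches the paper's proof: the paper likewise handles the case of two $1$:$1$-value cuts by implementing the leftmost one, and otherwise invokes Lemma~\ref{lem:cuts-fraction} (with $V\ge 7$ giving $\lceil(\lfloor V\rfloor-2)/2\rceil\ge 3$) to obtain a vertical $3$-rectangle stack and then repeats Case~2 of Theorem~\ref{thm:rectangle-2}. You simply spell out the details that the paper leaves implicit, and your arithmetic and separation arguments are all sound.
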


\begin{proof}
We consider two cases.

\emph{Case 1}:  Both agents have a vertical $1$:$1$-value cut. 
Take the cut to the left, give the rectangle to its left to the cutter, 
and the rectangle to its right to the other agent. 

\emph{Case 2}:  At least one agent, say Alice, has no vertical $1$:$1$-value cut. 
By Lemma~\ref{lem:cuts-fraction}, she has a vertical $3$-rectangle stack,  
so we can proceed as in Case~2 of Theorem~\ref{thm:rectangle-2}.
\end{proof}

Let $\vreq(n)$ be the smallest value of $V$ such that if each of $n$ agents
values the land-subset at $V$ or higher, then there is always an $s$-separated allocation
of this land-subset in which each agent's value for her share is at least $1$.
Obviously $\vreq(1)=1$, and by Lemma~\ref{lem:rectangle-2-fraction} we know that
$\vreq(2)\leq 7$.
We can now provide a finite (exponential) MMS approximation for every~$n$. 

\begin{theorem}\label{thm:rectangle-n}
For any $n\geq 2$, given any land division instance with a rectangular land and $n$ agents,
there exists an $s$-separated allocation 
in which each agent $i$ receives an axis-aligned rectangle with value at least 
$\emph{MMS}_i^k$, where $k = \lceil 17\cdot 2^{n-3}\rceil$.
\end{theorem}

\begin{proof}
By our normalization, it suffices to prove that $\vreq(n)\le 17\cdot 2^{n-3}$ for all $n\geq 2$. 
We proceed by induction on~$n$. 
The base case $n\leq 2$ is handled by 
Lemma~\ref{lem:rectangle-2-fraction}.
We assume that the claim is true for $n-1$ and prove it for $n$.
Let $W := \vreq(n-1)$.
We will argue that $\vreq(n)\leq \max\{2 W, W+2n+4\}$.
From this, it follows that 
$\vreq(3)\leq 17$,
and
$\vreq(n)\leq 17\cdot 2^{n-3}$ for all $n\geq 3$, since $2n+4\le 17\cdot 2^{n-4}$ for all $n\ge 4$.

Given a rectangular land-subset and $n$ agents who value it at $V$ each, 
where $V \geq \max\{2 W, W+2n+4\}$, consider two cases.

\emph{Case 1}: All agents have a vertical $1$:$W$-value cut.
Implement the leftmost cut, give the rectangle to its left to the leftmost cutter (breaking ties arbitrarily), and divide the rectangle to its right among the remaining 
$n-1$ agents using the induction hypothesis.

\emph{Case 2}:  At least one agent, say Alice, has no vertical $1$:$W$-value cut. 
Then by Lemma~\ref{lem:cuts-fraction}, she has a vertical $d$-rectangle stack, where
\[
d
\geq \left\lceil \frac{\lfloor V\rfloor-W-1}{2}\right\rceil 
\geq 
\left\lceil \frac{\lfloor W+2n+4\rfloor -W-1}{2}\right\rceil = n+2.
\]

Denote the $y$-coordinates of the bottom sides (resp., top sides) 
of the value-$1^+$ rectangles in the stack, from bottom to top, by 
$(b_j)_{j\in [d]}$ (resp., $(t_j)_{j\in [d]}$).
Note that $t_j + s\leq b_{j+1}$ for all $j\in [d-1]$.
In the illustration below, $d=5$ and $s$ is at most the height of the white space between each pair of consecutive green rectangles.
\begin{center}
\begin{tikzpicture}[scale=0.9]
\draw[draw=black,fill=green] (0,0) rectangle ++(5,0.5);
\draw[draw=black,fill=green] (0,0.7) rectangle ++(5,0.9);
\draw[draw=black,fill=green] (0,1.8) rectangle ++(5,0.7);
\draw[draw=black,fill=green] (0,2.7) rectangle ++(5,0.3);
\draw[draw=black,fill=green] (0,3.2) rectangle ++(5,0.5);

\node[draw=none] at (-0.2,0) {$b_1$};
\node[draw=none] at (5.3,0.7) {$b_2$};
\node[draw=none] at (-0.2,1.8) {$b_3$};
\node[draw=none] at (5.3,2.7) {$b_4$};
\node[draw=none] at (-0.2,3.2) {$b_5$};

\node[draw=none] at (-0.2,0.5) {$t_1$};
\node[draw=none] at (5.3,1.6) {$t_2$};
\node[draw=none] at (-0.2,2.5) {$t_3$};
\node[draw=none] at (5.3,3.0) {$t_4$};
\node[draw=none] at (-0.2,3.7) {$t_5$};
\end{tikzpicture}
\end{center}

For all $j\in[d]$,
let $H(j)$ be the set of agents (including Alice) who value $R(0,t_j)$ at least $W$.
We have $H(j)\subseteq H(j')$ whenever $j' > j$, and so
the sequence $(|H(j)|)_{j\in [d]}$ is non-decreasing.
Since $0\le |H(j)|\le n$ while $d \ge n+2$,
by the pigeonhole principle 
there exists a $z \in\{1,\dots,d-1\}$ such that $|H(z+1)|=|H(z)|$, and hence also $H(z+1)=H(z)$.

We consider three subcases based on the value of 
$|H(z)|=|H(z+1)|$. 
For brevity we set 
$R^-:=R(0, t_z)$ and $R^+:=R(b_{z+1}, 1)$.

\begin{itemize}
\item $|H(z)|=|H(z+1)|=0$. 
In this case,
give $R^-$ to Alice; she values it at least~$1$, since it contains her value-$1^+$ rectangle $R(b_1,t_1)$.
Divide $R^+$ among the other $n-1$ agents.
All agents value 
$R(0,t_{z+1})$ less than $W$,
so they value
$R^+$ at least $V-W\geq 2W-W = W$.
\item $|H(z)|=|H(z+1)| \in \{1,\ldots,n-1\}$. 
In this case, divide $R^-$ among the agents in $H(z)$ 
and divide $R^+$ among the remaining agents;
these latter agents value $R(0,t_{z+1})$ less than $W$, so they value 
$R^+$ more than $V-W \geq W$, since $V\geq 2W$.
\item $|H(z)|=|H(z+1)|=n$. 
Give $R^+$ to Alice; she values it at least $1$,
since it contains her value-$1^+$ rectangle $R(b_d,t_d)$.
Divide $R^-$ among the other $n-1$ agents, who value it at least $W$.
\end{itemize}
In all cases, each subdivision involves at most $n-1$ agents, and these agents value their 
land-subset at least $W$, so by the induction hypothesis each of them receives 
a value of at least~$1$.
\end{proof}

By adjusting the argument in the proof of Theorem~\ref{thm:rectangle-n}, 
we can obtain stronger bounds for $n=3$ and $n=4$; in particular,
we can guarantee each agent~$i$ 
a piece of value at least $\mms_i^{14}$ and $\mms_i^{24}$,
respectively.
The details can be found in Appendix~\ref{app:improved-3-4}.

The approximation factor in \Cref{thm:r-fat} for fat rectangles, which is linear in $n$, is much better than the factor in \Cref{thm:rectangle-n} for arbitrary rectangles, which is exponential in $n$. 
This raises the question of whether the MMS with respect to fat rectangles can itself be used as an approximation for the MMS with respect to general rectangles. 
In \Cref{app:square-mms-vs-rectangle-mms}
we show that the answer to this question is negative:
for any finite $r$,
the MMS with respect to $r$-fat rectangles does not provide any positive multiplicative approximation to the MMS with respect to general rectangles.

\section{Computing Maximin Allocations}
\label{sec:guillotine}

\subsection{Computational model}
\label{sec:computation-model}

The results in Section~\ref{sec:approx} are stated in terms of approximation guarantees.
To convert them into algorithms, we need to formally define our computational model. 
To do so, we propose a natural modification
of the classic (one-dimensional) cake-cutting model by \citet{RobertsonWe98} for the two-dimensional setting.

Consider an axis-aligned rectangle $L=[a_0, a_1]\times [b_0, b_1]$, which may be part of a larger land-subset.
We adapt the \cut{} and \eval{}
queries of the Robertson--Webb model to allow for horizontal and vertical cuts as follows.
The $\cut_i(|, L, \delta)$
query returns a value $a$ such that 
agent $i$ values the rectangle $[a_0, a]\times [b_0, b_1]$ at $\delta$, and 
the $\cut_i(-, L, \delta)$ 
query returns a value $b$ such that 
agent $i$ values the rectangle $[a_0, a_1]\times [b_0, b]$ at $\delta$;
we assume that this query returns `None'
if the agent values the entire rectangle less than $\delta$.
Similarly, the 
$\eval_i(|, L, a)$ query with 
$a_0\le a\le a_1$ returns the value that $i$ assigns to the rectangle 
$[a_0, a]\times [b_0, b_1]$,
whereas the
$\eval_i(-, L, b)$ query with 
$b_0\le b\le b_1$ returns the value that $i$ assigns to the rectangle 
$[a_0, a_1]\times [b_0, b]$.

We can now revisit the proofs of Theorems~\ref{thm:rectangle-2} and~\ref{thm:rectangle-n}
and check if they can be converted into algorithms that use $\cut$ and $\eval$ queries.
One can see that these proofs are constructive and their basic steps 
can be expressed in terms of these queries: 
a $p$:$q$-value cut can be implemented by two \cut{} queries,
and agents' values for rectangles of the form $R(x, y)$ can be determined
using \eval{} queries. 

However, these algorithms use the agents' 
$1$-out-of-$k$ maximin partitions as their starting points, and it is not clear if such partitions are efficiently computable.
Indeed, even in the one-dimensional case, there is no algorithm that always 
computes a maximin partition of an agent using finitely many queries, 
and the best known solution is an $\varepsilon$-additive approximation in time 
$O(n\log (1/\varepsilon))$ \citep[Thm.~3.2, Cor.~3.6]{ElkindSeSu22}.
For the two-dimensional case, even an $\varepsilon$-additive approximation
seems challenging.

\subsection{Guillotine partitions with separation}
\label{sec:guillotine-definition}

To circumvent this difficulty, we focus on maximin partitions with a special structure,
namely, {\em guillotine} partitions, which have been studied extensively in computational geometry
\citep{GonzalezRaSh94,ChristofidesHa95,AckermanBaPi06,Messaoud08,HorevKaKr09,AsinowskiBaMa14,RussoBoSf20}. 
 Formally, guillotine partitions are defined recursively, 
as follows.
\begin{definition}
Consider a land-subset $L=[a_0, a_1]\times [b_0, b_1]$,
a set of rectangles $\calP = \{P_1, \dots, P_t\}$, where $P_i\subseteq L$ 
for each $i\in [t]$, and a separation parameter $s$. 
We say that $\calP$ forms an {\em $s$-separated guillotine partition of $L$}
if one of the following three conditions holds:
\begin{itemize}
\item
$t=1$ and $P_1\subseteq L$;
\item
there exists an $a$ with $a_0<a< a_1-s$ and a partition of $\calP$
into two disjoint non-empty collections of rectangles $\calP_1$ and $\calP_2$
such that $\calP_1$ forms an $s$-separated guillotine partition of $[a_0, a]\times [b_0, b_1]$ 
and       $\calP_2$ forms an $s$-separated guillotine partition of $[a+s, a_1]\times [b_0, b_1]$;
\item
there exists a $b$ with $b_0<b< b_1-s$ and a partition of $\calP$
into two disjoint non-empty collections of rectangles $\calP_1$ and $\calP_2$
such that $\calP_1$ forms an $s$-separated guillotine partition of $[a_0, a_1]\times [b_0, b]$ 
and       $\calP_2$ forms an $s$-separated guillotine partition of $[a_0, a_1]\times [b+s, b_1]$.
\end{itemize}
\end{definition}

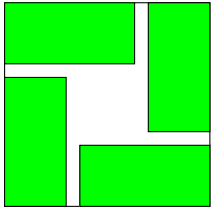
\begin{figure}
\centering
\begin{tikzpicture}[scale=0.9]
\draw (0,0) rectangle (3,3);
\draw [fill=green] (0,0) rectangle (0.9,1.9);
\draw [fill=green] (0,2.1) rectangle (1.9,3);
\draw [fill=green] (2.1,1.1) rectangle (3,3);
\draw [fill=green] (1.1,0) rectangle (3,0.9);
\end{tikzpicture}
\caption{Example of an $s$-separated partition
that is not a guillotine partition. The small space between each pair of `adjacent' rectangles has length $s$.}
\label{fig:non-guillotine}
\end{figure}

Intuitively, an $s$-separated guillotine partition is obtained by a sequence of cuts, 
where each cut splits a rectangle into two $s$-separated rectangles. 
All partitions in Figure~\ref{fig:rectangle-2-3} are guillotine partitions, 
while Figure~\ref{fig:non-guillotine} provides an example of an $s$-separated partition
that is not a guillotine partition.

\subsection{Maximin allocations with guillotine partitions}
\label{sec:MMS-guillotine}

Let $\gumms_i^{k,s}$ denote the maximin share of agent~$i$ with respect to $s$-separated guillotine partitions into $k$ parts (it is defined similarly to the $\mms$ in \Cref{def:MMS}, except that the supremum is over all $s$-separated guillotine partitions $\mathbf{P}$).
We will show that, for every constant $k$, it is possible to approximate $\gumms_i^{k,s}$ up to an additive factor of $\varepsilon$, in time $O((\log(1/\varepsilon))^k)$.

We define a \emph{guillotine tree} to be a rooted binary tree in which each non-leaf node is labelled with either `H' (for `Horizontal') or `V' (for `Vertical'). 
Each guillotine tree with $k$ leaf nodes represents a way in which a rectangular land-subset can be partitioned into $k$ parts using guillotine cuts.\footnote{The number of such trees has been studied by \citet{YaoChCh03} and \citet{AckermanBaPi06}.} 
For example, the tree in \Cref{fig:guillotine-tree} represents partitions into $k = 4$ parts in which the rectangle is first cut horizontally, then the top part is cut horizontally and the bottom part is cut vertically.

\begin{figure}
\begin{center}
\begin{tikzpicture}
\draw (3.67,3) -- (4.33,4.5) -- (6,6) -- (7.67,4.5) -- (8.33,3);
\draw (4.33,4.5) -- (5,3);
\draw (7.67,4.5) -- (7,3);
\draw[fill=white] (6,6) circle [radius=0.3];
\draw[fill=white] (4.33,4.5) circle [radius=0.3];
\draw[fill=white] (7.67,4.5) circle [radius=0.3];
\draw[fill=white] (3.67,3) circle [radius=0.3];
\draw[fill=white] (5,3) circle [radius=0.3];
\draw[fill=white] (7,3) circle [radius=0.3];
\draw[fill=white] (8.33,3) circle [radius=0.3];
\node at (6,6) {\small $H$};
\node at (4.33,4.5) {\small $H$};
\node at (7.67,4.5) {\small $V$};
\end{tikzpicture}
\end{center}
\caption{
Example of a guillotine tree for a partition into $k = 4$ parts.
\label{fig:guillotine-tree}
}
\end{figure}
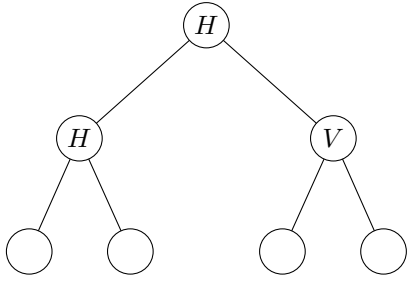

Our algorithm works by checking all guillotine trees with $k$~leaves and finding one that yields an approximately optimal partition.
We will need the following lemma as a subroutine.
Given a value function and a tree~$T$, denote by $T$-$\text{MMS}^{k,s}$ the maximin share with respect to $s$-separated partitions into $k$ parts according to the tree $T$.

\begin{lemma}
\label{lem:guillotine-recurse}
Let $k$ be a constant.
Given a rectangular land-subset~$L$, a value function~$v$ such that $v(L)=1$, a guillotine tree $T$ with $k$ leaves, and real numbers $r\ge 0$ and $\varepsilon > 0$, it is possible to decide in time $O((\log(1/\varepsilon))^{k-1})$ which of the following two options holds:
\begin{itemize}
\item $T$-$\emph{MMS}^{k,s} \ge r$;
\item $T$-$\emph{MMS}^{k,s} < r + k\varepsilon$.
\end{itemize}
(At least one option must hold; if both options hold, it suffices to return one of them.)
If we return the first option, we can output an $s$-separated partition such that each part has value at least $r$ as well.
\end{lemma}

\begin{proof}
We proceed by induction on $k$.
The base case $k = 1$ holds trivially, as no cut needs to be made.
Let $k\ge 2$. Assume without loss of generality that the first cut according to $T$ is vertical, and the left and right subtree has $k_0$ and $k-k_0$ leaves, respectively, for some $k_0\in \{1,\ldots,k-1\}$.
For each vertical cut $C$, denote by $\text{Left}(C)$ the rectangle to the left of~$C$ and by $\text{Right}(C)$ the rectangle to its right.

We place a cut $C$ of width~$s$ such that 
$v(\text{Left}(C)) = v(L)/2$, and recurse on both sides using the inductive hypothesis.
Denote by $T_0$ and $T_1$ the left and right subtree of $T$, respectively, and let $\beta_0 := T_0\text{-MMS}^{k_0,s}(\text{Left}(C))$ and $\beta_1 := T_1\text{-MMS}^{k-k_0,s}(\text{Right}(C))$.

\emph{Case 1}: 
$\beta_0 \ge r$
and $\beta_1 \geq r$. Then we return that $T$-$\text{MMS}^{k,s} \ge r$.

\emph{Case 2}:  
$\beta_0 < r + k_0\varepsilon$
and $\beta_1 < r + (k-k_0)\varepsilon$.
In this case, both values are less than $r + k\varepsilon$.
Moving the cut to the left will only make the MMS on the left smaller, and moving the cut to the right will only make the MMS on the right smaller. 
Therefore, there is no guillotine partition according to~$T$ with MMS at least $r + k\varepsilon$, so we return that $T$-$\text{MMS}^{k,s} < r + k\varepsilon$.

\emph{Case 3}: 
Either $\beta_0 < r + k_0\varepsilon$ while $\beta_1 \geq r$, or $\beta_0 \ge r$ while $\beta_1 < r + (k-k_0)\varepsilon$.
In the former case, we place a cut $C'$ of width $s$ such that $v(\text{Left}(C')) = 3\cdot v(L)/4$ (that is, the average between $v(L)/2$ and $v(L)$), and repeat the same procedure.
Analogously, in the latter case,
we place a cut $C'$ of width $s$ such that $v(\text{Left}(C')) =  v(L)/4$ (that is, the average between $v(L)/2$ and $0$), and repeat the same procedure.

We continue until either we return that $T$-$\text{MMS}^{k,s} \ge r$ or that $T$-$\text{MMS}^{k,s} < r + k\varepsilon$, or we have placed two cuts, $C_1$ and $C_2$, such that
\begin{enumerate}[label=(\roman*)]
\item $C_1$ is to the left of $C_2$, and 
$v(\text{Left}(C_2))  - v(\text{Left}(C_1)) < \varepsilon$;
\item 
$T_0\text{-MMS}^{k_0,s}(\text{Left}(C_1)) < r + k_0\varepsilon$
(while $T_1\text{-MMS}^{k-k_0,s}(\text{Right}(C_1)) \geq r$);
\item 
$T_1\text{-MMS}^{k-k_0,s}(\text{Right}(C_2)) < r + (k-k_0)\varepsilon$
(while $T_0\text{-MMS}^{k_0,s}(\text{Left}(C_2)) \geq  r$).
\end{enumerate}
Note that if we have not terminated, we will reach this situation after placing $O(\log(1/\varepsilon))$ cuts.
By conditions~(i) and (ii), 
$T_0\text{-MMS}^{k_0,s}(\text{Left}(C_2)) < (r + k_0\varepsilon)+\varepsilon \le r + k\varepsilon$.
Moreover, by condition~(iii), 
$T_1\text{-MMS}^{k-k_0,s}(\text{Right}(C_2)) < r + k \varepsilon$, too.
Hence, we may return that $T$-$\text{MMS}^{k,s} < r + k\varepsilon$ in this situation.

Each recursive call takes time $O((\log(1/\varepsilon))^{k-2})$ by the inductive hypothesis, and the number of recursive calls is $O(\log(1/\varepsilon))$, so the total running time is $O((\log(1/\varepsilon))^{k-1})$.
In the case where we return $T$-$\text{MMS}^{k,s} \ge r$, our procedure allows us to construct the associated partition as well.
\end{proof}

We are now ready to prove our main theorem.

\begin{theorem}
\label{thm:guillotine-tree}
Let $k$ be a constant.
Given a rectangular land-subset $L$, a value function~$v$ normalized so that $v(L)=1$, and a real number $\varepsilon > 0$, we can compute an $s$-separated guillotine partition of $L$ into $k$ rectangles such that the value of each part is at least $\gumms^{k,s} - \varepsilon$ in time $O((\log(1/\varepsilon))^{k})$.
\end{theorem}

\begin{proof}
Let $B$ be the interval consisting of possible values of $\gumms^{k,s}$; initially, $B = [0,1]$.
While the length of $B = [x,y]$ is greater than $\varepsilon$, we proceed as follows.
Let $z$ be the midpoint of $B$.
Applying \Cref{lem:guillotine-recurse} over all (constant number of) guillotine trees~$T$ with $k$ leaves and using $\varepsilon/(4k)$ instead of $\varepsilon$ in the lemma statement, we can determine in time $O((\log(4k/\varepsilon))^{k-1}) = O((\log(1/\varepsilon))^{k-1})$ which of the following two options holds:
\begin{enumerate}[label=(\roman*)]
\item $T$-$\text{MMS}^{k,s} \ge z$ for some $T$;
\item $T$-$\text{MMS}^{k,s} < z + \varepsilon/4$ for all $T$.
\end{enumerate}

If (i) holds, we can shrink $B$ to $[z,y]$, and by \Cref{lem:guillotine-recurse}, we can also find an $s$-separated partition such that each part has value at least $z$.
On the other hand, if (ii) holds, we can shrink $B$ to $[x,z+\varepsilon/4]$.
We repeat this procedure as long as the length of~$B$ is greater than $\varepsilon$.
In each iteration, the length of $B$ is reduced to at most $3/4$ times the previous length, and we have an $s$-separated partition such that each part has value at least the smallest value in $B$.\footnote{This condition holds trivially at the beginning because the smallest value in $B$ is $0$.}
After $O(\log(1/\varepsilon))$ iterations, the length of $B$ becomes at most $\varepsilon$, and we output the smallest value in $B$ along with the associated partition.
The guarantee follows from the definition of $B$, and the running time is $O((\log(1/\varepsilon))^{k-1}\cdot \log(1/\varepsilon)) = O((\log(1/\varepsilon))^{k})$.
\end{proof}

\begin{remark}
\Cref{lem:guillotine-recurse} assumes that the usable shapes are arbitrary rectangles. 
However, it can also be adapted to the setting in which the usable shapes are squares (or, more generally, fat rectangles).
The only place in the proof that needs to be modified is the induction base. 
For $k=1$, given a rectangular land-subset $L$, instead of evaluating $L$ and comparing its value to $r$, we compute (approximately) the highest value of a square contained in $L$, and compare it to $r$.

To compute the highest value of a square contained in $L$, partition $L$ along its longer side into $d:=\ceil{2/\varepsilon}$ strips of value at most $\varepsilon/2$. 
Assume without loss of generality that the longer side is parallel to the $y$ axis.
Let $y_1,\ldots,y_{d-1}$ be the $y$-coordinates of the cuts, and let $x_0$ be the leftmost coordinate of the rectangle. For each $j\in[d-1]$, evaluate the square with lower-left corner at $(x_0,y_j)$ and side length equal to the shorter side of $L$, provided that the square is contained in $L$.
Moreover, evaluate the topmost and bottommost rectangles contained in $L$ with side length equal to the shorter side of $L$.
Of the (at most $d+1$) resulting squares, let $R^*$ be a square with the highest value.

If $v(R^*)\geq r$, then we can return that $T$-$\text{MMS}^{1,s} \ge r$.
Suppose that $v(R^*)< r$.
By construction, for every square $R\subseteq L$, there is an evaluated square $R'\subseteq L$ such that $v(R\setminus R')\leq \varepsilon/2$.
Therefore, $v(R) - v(R^*) \leq \varepsilon/2$, and so $v(R) \le r+\varepsilon/2 < r + \varepsilon$.
Hence, we can return that $T$-$\text{MMS}^{k,s} < r + \varepsilon = r+k\varepsilon$. 

The induction step of  \Cref{lem:guillotine-recurse} remains unchanged.
Moreover, once we plug the adapted \Cref{lem:guillotine-recurse} into \Cref{thm:guillotine-tree}, its proof works as is. 
Therefore, 
both \Cref{lem:guillotine-recurse} and \Cref{thm:guillotine-tree} hold whenever $U$ is the family of squares, fat rectangles, or arbitrary rectangles, with an extra factor of $O(1/\varepsilon)$ in the running time due to the induction base.
\end{remark}

\subsection{General partitions}
\label{sec:MMS-general}

How much value do we lose by considering guillotine partitions instead of general ones?
Figure~\ref{fig:non-guillotine} illustrates that this loss is non-trivial.  
To obtain a bound on the approximation ratio, we relate it to the \emph{guillotine separation} problem studied in computational geometry.

\begin{definition}
Let $S$ be a set of pairwise-disjoint geometric objects,
and $P$ be a partition of the plane.
The \emph{$S$-value} of $P$ is the number of parts of $P$ that contain at least one whole object from $S$.

The \emph{guillotine-separation number} of $S$ is the largest $S$-value of a ($0$-separated) guillotine partition.
\end{definition}
In general, the guillotine-separation number of $S$ might be smaller than $|S|$. 
For example, if $S$ is the set of rectangles in Figure~\ref{fig:non-guillotine}, then since every guillotine partition must cut through at least one rectangle, the guillotine-separation number is at most $3$ (it is easy to see that it is exactly $3$).

\begin{definition}
Let $U$ be a family of geometric objects and $m \geq 2$ an integer. The \emph{$m$-th guillotine-separation number} of $U$, denoted $\gsn(U,m)$, is the smallest guillotine-separation number of a subset of $m$ objects from $U$.
\end{definition}
The first in-depth study of guillotine-separation numbers was done by \citet{PachTa00}, who proved upper and lower bounds for $\gsn(U,m)$ for various families $U$. \citet{AbedChCo15} proved that
\[
\frac{m}{2} + o(m)
\geq
\gsn(\text{axis-parallel unit squares},m) \geq  \frac{m}{2}
\]
and
\[
\gsn(\text{axis-parallel squares},m)\geq \frac{m}{81}.
\]
\citet{KhanPi20} improved the lower bound to 
\begin{align}
\gsn(\text{axis-parallel squares},m)\geq \frac{m}{40},
\label{eq:KhanPi-squares}
\end{align}
and proved that 
\begin{align}
\gsn(\text{axis-parallel rectangles},m)\geq \frac{m}{1+\log_2 m} = \frac{m}{\log_2(2m)}
\label{eq:KhanPi-rectangles}
\end{align}
and
\[
\gsn(\text{axis-parallel rectangles},m)\geq \frac{m}{\log_3 (m+2)}.
\]
The exact guillotine-separation numbers for squares and rectangles are still open.

We now relate the guillotine-separation number to the maximin share.

\begin{theorem}
\label{thm:guillotine-vs-general}
Fix the family $U$ of usable pieces (either axis-parallel squares or axis-parallel rectangles).

Let $\gumms_i^{k,s}$ denote the maximin share of agent~$i$ with respect to $s$-separated guillotine partitions into $k$ parts. 
Let $m$ be an integer for which $\gsn(U,m)\geq k$. 
It holds that
$\gumms_i^{k,s}\geq \emph{MMS}_i^{m,s}$.
\end{theorem}
\begin{proof}
Let $S$ be the set of rectangles in a 1-out-of-$m$ $s$-separated maximin partition of agent~$i$.
Replace each rectangle $Q\in S$
with $\wrap{Q}{s}$ (see \Cref{thm:r-fat}).
Since the original rectangles are $s$-separated, the wrapped rectangles are pairwise-disjoint 
(see Figure~\ref{fig:square-wrap}).

Since $\gsn(U,m)\geq k$, there exists a guillotine partition in which at least $k$ parts contain a whole rectangle from $S$; denote these whole rectangles by $\wrap{Q_1}{s},\allowbreak\ldots,\allowbreak\wrap{Q_k}{s}$.
The rectangles $Q_1,\ldots,Q_k$ are $s$-separated, and the value of each of them is at least $\text{MMS}_i^{m,s}$. 
Therefore, $\gumms_i^{k,s}\geq \text{MMS}_i^{m,s}$, as claimed.
\end{proof}

Using the aforementioned lower bounds \eqref{eq:KhanPi-squares} and \eqref{eq:KhanPi-rectangles} of \citet{KhanPi20}, we get:
\begin{corollary}
\label{cor:guillotine-vs-general}
(a)
When $U$ is the family of axis-parallel squares, for every integer $k\geq 2$,
we have
$\gumms_i^{k,s}\geq \mms_i^{40k,s}$.

(b)
Let $h(m) := m/\log_2(2m)$.
When $U$ is the family of axis-parallel rectangles, for every integer $k\geq 2$,
$\gumms_i^{k,s}\geq \mms_i^{\lceil h^{-1}(k)\rceil,s}$.
\end{corollary}
Note that\footnote{See https://math.stackexchange.com/a/4546052.} the inverse of the function $h$ satisfies $h^{-1}(k) \in O(k\log k)$.

Combining \Cref{cor:guillotine-vs-general}
with \Cref{thm:guillotine-tree}, we obtain that, for any constant $k$, we can compute $\mms_i^{40k,s}$ (for squares) or $\mms_i^{\lceil h^{-1}(k)\rceil,s}$ (for rectangles) up to an additive error of $\varepsilon$ in time $O((\log(1/\varepsilon))^{k+1})$ (for squares) or $O((\log(1/\varepsilon))^{k})$ (for rectangles).
We can then apply \Cref{thm:rectangle-n} to obtain the following corollary.
\begin{corollary}
Let $n\geq 2$ be a constant, and
let $m := \lceil h^{-1}(17\cdot 2^{n-3})\rceil$, where $h$ is defined as in \Cref{cor:guillotine-vs-general}.
For any $\varepsilon>0$,
we can compute 
in time polynomial in $\log(1/\varepsilon)$
an $s$-separated allocation among $n$ agents, in which each agent~$i$ receives an axis-parallel rectangle with value at least $\emph{MMS}_i^{m, s}-\varepsilon$.
\end{corollary}
\begin{proof}
We first use Theorem~\ref{thm:guillotine-tree} to compute for each agent~$i$ an $s$-separated guillotine partition into $k$ parts such that $i$'s value for each part is at least $\gumms_i^{k,s} - \varepsilon$ in time 
$O((\log(1/\varepsilon))^k)$, where $k = 17\cdot 2^{n-3}$.
We normalize each agent's value for each of her MMS-rectangles to $1$ and the value outside these rectangles to $0$, and apply Theorem~\ref{thm:rectangle-n} on the resulting instance.\footnote{We assume here that we can make queries on the normalized instance. (Alternatively, this can be simulated by normalizing the values of the MMS-rectangles so that the least valuable rectangle has value~$1$, shrinking the remaining MMS-rectangles until they also have value~$1$, and extending the \cut{} query in our modified Robertson--Webb model to finding a value $a$ such that the \emph{union} of rectangles $[a_0,a]\times([b_0,b_1]\cup [b_2,b_3]\cup\dots\cup[b_{2t},b_{2t+1}])$ has value $\delta$, for any given $a_0,b_0,b_1,\dots,b_{2t+1},\delta$.) Also, we use the value $17\cdot 2^{n-3}$ instead of $\vreq(n)$ in the implementation of the algorithm from Theorem~\ref{thm:rectangle-n}.}
Theorem~\ref{thm:rectangle-n} guarantees that agent~$i$ receives value at least $\gumms_i^{k,s} - \varepsilon$.
By \Cref{cor:guillotine-vs-general}(b), this value is at least $\mms_i^{m,s} - \varepsilon$.
\end{proof}

Similarly, using \Cref{thm:r-fat} and \Cref{cor:guillotine-vs-general}(a), we get:
\begin{corollary}
Let $n\geq 2$ be a constant, and
let $m := 40\cdot(4n-5)$.
For any $\varepsilon>0$,
we can compute 
in time 
polynomial in $\log(1/\varepsilon)$
an $s$-separated allocation among $n$ agents, in which each agent~$i$ receives an axis-parallel square with value at least $\emph{MMS}_i^{m, s}-\varepsilon$.
\end{corollary}

In \Cref{app:alternative-MMS}, we present two alternative algorithms for approximating the maximin share. These algorithms are based on discretization: we partition the land into horizontal and vertical strips of value at most $\varepsilon$, and find a maximin partition with respect to the rectangles with corners at the grid points. This technique works only when the usable shapes are the rectangles. For any integer $k\geq 2$ and real number $\varepsilon>0$, we show how to compute the following in time polynomial in $k$ and $1/\varepsilon$:
\begin{itemize}
\item A guillotine partition into $k$ rectangles, each of which has a value of at least $\gumms_i^{k,s}-\varepsilon$ (\Cref{app:alternative-guillotine});
\item A general partition into $k$ rectangles, each of which has a value of at least $\mms_i^{3k,s}-\varepsilon$ 
(\Cref{app:alternative-general}). This algorithm uses a recent algorithm for the problem of \emph{maximum independent set of rectangles} \citep{galvez20223}.
\end{itemize}

\section{Conclusion and Future Work}

In this paper, we continue the quest of bringing the theory of fair division closer to practice by investigating fair land allocation under separation constraints.
We establish bounds on achievable maximin share guarantees for a variety of shapes and develop a number of new techniques in the process; some of our techniques may be independently interesting from a computational geometry perspective.
While our maximin share bound for fat rectangles is polynomial in the number of agents~$n$ (\Cref{thm:r-fat}), the one for arbitrary rectangles is exponential (\Cref{thm:rectangle-n}).
Improving the latter bound, perhaps to a polynomial bound, is a challenging question which will likely require novel geometric insights.
For future work, it would also be interesting to test our algorithms on real land division data (see, e.g., the work of \citet{ShtechmanGoSe22}) as well as to explore the possibilities of efficient computation with non-guillotine or other types of cuts.
Finally, one could try to extend our results to the more general setting where different agents may have unequal entitlements to the land.\footnote{The case of unequal entitlements has been previously studied for both divisible \citep{Segalhalevi19,CsehFl20} and indivisible resources \citep{ChakrabortyIgSu21,ChakrabortyScSu21,ChakrabortySeSu22,SuksompongTe22}.}

\section*{Acknowledgments} 

This work was partially supported by the European Research Council (ERC) under grant number 639945 (ACCORD), by the Israel Science Foundation under grant number 712/20, by the Singapore Ministry of Education under grant number MOE-T2EP20221-0001, and by an NUS Start-up Grant.
We would like to thank Kshitij Gajjar for his insights and references regarding guillotine partitions, 
Alex Ravsky for his insights regarding rainbow independent sets, 
Qiaochu Yuan for mathematical help, 
and the anonymous reviewers of the 30th International Joint Conference on Artificial Intelligence (IJCAI 2021) and Computational Geometry:~Theory and Applications for their valuable comments and suggestions.

\bibliographystyle{plainnat}
\bibliography{main,geometry}

\begin{thebibliography}{61}
\providecommand{\natexlab}[1]{#1}
\providecommand{\url}[1]{\texttt{#1}}
\expandafter\ifx\csname urlstyle\endcsname\relax
  \providecommand{\doi}[1]{doi: #1}\else
  \providecommand{\doi}{doi: \begingroup \urlstyle{rm}\Url}\fi

\bibitem[Abed et~al.(2015)Abed, Chalermsook, Correa, Karrenbauer,
  P{\'e}rez-Lantero, Soto, and Wiese]{AbedChCo15}
Fidaa Abed, Parinya Chalermsook, Jos{\'e}~R. Correa, Andreas Karrenbauer, Pablo
  P{\'e}rez-Lantero, Jos{\'e}~A. Soto, and Andreas Wiese.
\newblock On guillotine cutting sequences.
\newblock In \emph{Proceedings of the 18th International Workshop on
  Approximation Algorithms for Combinatorial Optimization Problems (APPROX)},
  pages 1--19, 2015.

\bibitem[Ackerman et~al.(2006)Ackerman, Barequet, Pinter, and
  Romik]{AckermanBaPi06}
Eyal Ackerman, Gill Barequet, Ron~Y. Pinter, and Dan Romik.
\newblock The number of guillotine partitions in $d$ dimensions.
\newblock \emph{Information Processing Letters}, 98\penalty0 (4):\penalty0
  162--167, 2006.

\bibitem[Adamaszek and Wiese(2013)]{adamaszek2013approximation}
Anna Adamaszek and Andreas Wiese.
\newblock Approximation schemes for maximum weight independent set of
  rectangles.
\newblock In \emph{Proceedings of the 54th Annual IEEE Symposium on Foundations
  of Computer Science (FOCS)}, pages 400--409, 2013.

\bibitem[Agarwal et~al.(1995)Agarwal, Katz, and Sharir]{AgarwalKaSh95}
Pankaj~K. Agarwal, Matthew~J. Katz, and Micha Sharir.
\newblock Computing depth orders for fat objects and related problems.
\newblock \emph{Computational Geometry}, 5\penalty0 (4):\penalty0 187--206,
  1995.

\bibitem[Agarwal et~al.(1998)Agarwal, van Kreveld, and Suri]{agarwal1998label}
Pankaj~K. Agarwal, Marc van Kreveld, and Subhash Suri.
\newblock Label placement by maximum independent set in rectangles.
\newblock \emph{Computational Geometry}, 11\penalty0 (3--4):\penalty0 209--218,
  1998.

\bibitem[Aharoni et~al.(2019)Aharoni, Briggs, Kim, and Kim]{aharoni2019RIS}
Ron Aharoni, Joseph Briggs, Jinha Kim, and Minki Kim.
\newblock Rainbow independent sets in certain classes of graphs.
\newblock \emph{arXiv preprint arXiv:1909.13143}, 2019.

\bibitem[Armaselu and Daescu(2015)]{armaselu2015algorithms}
Bogdan Armaselu and Ovidiu Daescu.
\newblock Algorithms for fair partitioning of convex polygons.
\newblock \emph{Theoretical Computer Science}, 607:\penalty0 351--362, 2015.

\bibitem[Asinowski et~al.(2014)Asinowski, Barequet, Mansour, and
  Pinter]{AsinowskiBaMa14}
Andrei Asinowski, Gill Barequet, Toufik Mansour, and Ron~Y. Pinter.
\newblock Cut equivalence of $d$-dimensional guillotine partitions.
\newblock \emph{Discrete Mathematics}, 331:\penalty0 165--174, 2014.

\bibitem[Bei and Suksompong(2021)]{BeiSu21}
Xiaohui Bei and Warut Suksompong.
\newblock Dividing a graphical cake.
\newblock In \emph{Proceedings of the 35th AAAI Conference on Artificial
  Intelligence (AAAI)}, pages 5159--5166, 2021.

\bibitem[Bei et~al.(2022)Bei, Igarashi, Lu, and Suksompong]{BeiIgLu22}
Xiaohui Bei, Ayumi Igarashi, Xinhang Lu, and Warut Suksompong.
\newblock The price of connectivity in fair division.
\newblock \emph{SIAM Journal on Discrete Mathematics}, 36\penalty0
  (2):\penalty0 1156--1186, 2022.

\bibitem[Berliant and Dunz(2004)]{BerliantDu04}
Marcus Berliant and Karl Dunz.
\newblock A foundation of location theory: Existence of equilibrium, the
  welfare theorems, and core.
\newblock \emph{Journal of Mathematical Economics}, 40\penalty0 (5):\penalty0
  593--618, 2004.

\bibitem[Berliant et~al.(1992)Berliant, Thomson, and Dunz]{BerliantThDu92}
Marcus Berliant, William Thomson, and Karl Dunz.
\newblock On the fair division of a heterogeneous commodity.
\newblock \emph{Journal of Mathematical Economics}, 21\penalty0 (3):\penalty0
  201--216, 1992.

\bibitem[Biswas and Barman(2018)]{BiswasBa18}
Arpita Biswas and Siddharth Barman.
\newblock Fair division under cardinality constraints.
\newblock In \emph{Proceedings of the 27th International Joint Conference on
  Artificial Intelligence (IJCAI)}, pages 91--97, 2018.

\bibitem[Blagojevi{\'c} and Ziegler(2014)]{blagojevic2014convex}
Pavle V.~M. Blagojevi{\'c} and G{\"u}nter~M. Ziegler.
\newblock Convex equipartitions via equivariant obstruction theory.
\newblock \emph{Israel Journal of Mathematics}, 200\penalty0 (1):\penalty0
  49--77, 2014.

\bibitem[Bouveret et~al.(2017)Bouveret, Cechl\'{a}rov\'{a}, Elkind, Igarashi,
  and Peters]{BouveretCeEl17}
Sylvain Bouveret, Katar\'{i}na Cechl\'{a}rov\'{a}, Edith Elkind, Ayumi
  Igarashi, and Dominik Peters.
\newblock Fair division of a graph.
\newblock In \emph{Proceedings of the 26th International Joint Conference on
  Artificial Intelligence (IJCAI)}, pages 135--141, 2017.

\bibitem[Budish(2011)]{Budish11}
Eric Budish.
\newblock The combinatorial assignment problem: Approximate competitive
  equilibrium from equal incomes.
\newblock \emph{Journal of Political Economy}, 119\penalty0 (6):\penalty0
  1061--1103, 2011.

\bibitem[Chakraborty et~al.(2021{\natexlab{a}})Chakraborty, Igarashi,
  Suksompong, and Zick]{ChakrabortyIgSu21}
Mithun Chakraborty, Ayumi Igarashi, Warut Suksompong, and Yair Zick.
\newblock Weighted envy-freeness in indivisible item allocation.
\newblock \emph{ACM Transactions on Economics and Computation}, 9\penalty0
  (3):\penalty0 18:1--18:39, 2021{\natexlab{a}}.

\bibitem[Chakraborty et~al.(2021{\natexlab{b}})Chakraborty, Schmidt-Kraepelin,
  and Suksompong]{ChakrabortyScSu21}
Mithun Chakraborty, Ulrike Schmidt-Kraepelin, and Warut Suksompong.
\newblock Picking sequences and monotonicity in weighted fair division.
\newblock \emph{Artificial Intelligence}, 301:\penalty0 103578,
  2021{\natexlab{b}}.

\bibitem[Chakraborty et~al.(2022)Chakraborty, Segal-Halevi, and
  Suksompong]{ChakrabortySeSu22}
Mithun Chakraborty, Erel Segal-Halevi, and Warut Suksompong.
\newblock Weighted fairness notions for indivisible items revisited.
\newblock In \emph{Proceedings of the 36th AAAI Conference on Artificial
  Intelligence (AAAI)}, pages 4949--4956, 2022.

\bibitem[Chalermsook and Chuzhoy(2009)]{chalermsook2009maximum}
Parinya Chalermsook and Julia Chuzhoy.
\newblock Maximum independent set of rectangles.
\newblock In \emph{Proceedings of the 20th Annual ACM-SIAM Symposium on
  Discrete Algorithms (SODA)}, pages 892--901, 2009.

\bibitem[Chalermsook and Walczak(2021)]{chalermsook2021coloring}
Parinya Chalermsook and Bartosz Walczak.
\newblock Coloring and maximum weight independent set of rectangles.
\newblock In \emph{Proceedings of the 32nd ACM-SIAM Symposium on Discrete
  Algorithms (SODA)}, pages 860--868, 2021.

\bibitem[Chan(2004)]{chan2004note}
Timothy~M. Chan.
\newblock A note on maximum independent sets in rectangle intersection graphs.
\newblock \emph{Information Processing Letters}, 89\penalty0 (1):\penalty0
  19--23, 2004.

\bibitem[Christofides and Hadjiconstantinou(1995)]{ChristofidesHa95}
Nicos Christofides and Eleni Hadjiconstantinou.
\newblock An exact algorithm for orthogonal 2-{D} cutting problems using
  guillotine cuts.
\newblock \emph{European Journal of Operational Research}, 83\penalty0
  (1):\penalty0 21--38, 1995.

\bibitem[Chuzhoy and Ene(2016)]{chuzhoy2016approximating}
Julia Chuzhoy and Alina Ene.
\newblock On approximating maximum independent set of rectangles.
\newblock In \emph{Proceedings of the 57th Annual Symposium on Foundations of
  Computer Science (FOCS)}, pages 820--829, 2016.

\bibitem[Cormen et~al.(2009)Cormen, Leiserson, Rivest, and Stein]{CormenLeRi09}
Thomas~H. Cormen, Charles~E. Leiserson, Ronald~L. Rivest, and Clifford Stein.
\newblock \emph{Introduction to Algorithms}.
\newblock {MIT} {P}ress, 3rd edition, 2009.

\bibitem[Cseh and Fleiner(2020)]{CsehFl20}
\'{A}gnes Cseh and Tam\'{a}s Fleiner.
\newblock The complexity of cake cutting with unequal shares.
\newblock \emph{ACM Transactions on Algorithms}, 16\penalty0 (3):\penalty0
  29:1--29:21, 2020.

\bibitem[Dall'Aglio and Maccheroni(2009)]{DallAglioMa09}
Marco Dall'Aglio and Fabio Maccheroni.
\newblock Disputed lands.
\newblock \emph{Games and Economic Behavior}, 66\penalty0 (1):\penalty0 57--77,
  2009.

\bibitem[Devulapalli(2014)]{Devulapalli14}
Raghuveer Devulapalli.
\newblock \emph{Geometric partitioning algorithms for fair division of
  geographic resources}.
\newblock PhD thesis, University of Minnesota, 2014.

\bibitem[Dubins and Spanier(1961)]{DubinsSp61}
Lester~E. Dubins and Edwin~H. Spanier.
\newblock How to cut a cake fairly.
\newblock \emph{American Mathematical Monthly}, 68\penalty0 (1):\penalty0
  1--17, 1961.

\bibitem[Elkind et~al.(2021)Elkind, Segal-Halevi, and
  Suksompong]{ElkindSeSu21-Graph}
Edith Elkind, Erel Segal-Halevi, and Warut Suksompong.
\newblock Graphical cake cutting via maximin share.
\newblock In \emph{Proceedings of the 30th International Joint Conference on
  Artificial Intelligence (IJCAI)}, pages 161--167, 2021.

\bibitem[Elkind et~al.(2022)Elkind, Segal-Halevi, and Suksompong]{ElkindSeSu22}
Edith Elkind, Erel Segal-Halevi, and Warut Suksompong.
\newblock Mind the gap: Cake cutting with separation.
\newblock \emph{Artificial Intelligence}, 313:\penalty0 103783, 2022.

\bibitem[Frettl{\"o}h and Richter(2022)]{frettloh2022fair}
Dirk Frettl{\"o}h and Christian Richter.
\newblock Fair partitions of the plane into incongruent pentagons.
\newblock \emph{arXiv preprint arXiv:2202.01674}, 2022.

\bibitem[G{\'a}lvez et~al.(2022)G{\'a}lvez, Khan, Mari, M{\"o}mke, Pittu, and
  Wiese]{galvez20223}
Waldo G{\'a}lvez, Arindam Khan, Mathieu Mari, Tobias M{\"o}mke,
  Madhusudhan~Reddy Pittu, and Andreas Wiese.
\newblock A 3-approximation algorithm for maximum independent set of
  rectangles.
\newblock In \emph{Proceedings of the 33rd ACM-SIAM Symposium on Discrete
  Algorithms (SODA)}, pages 894--905, 2022.

\bibitem[Gonzalez et~al.(1994)Gonzalez, Razzazi, Shing, and
  Zheng]{GonzalezRaSh94}
Teofilo~F. Gonzalez, Mohammadreza Razzazi, Man-Tak Shing, and Si-Qing Zheng.
\newblock On optimal guillotine partitions approximating optimal $d$-box
  partitions.
\newblock \emph{Computational Geometry}, 4\penalty0 (1):\penalty0 1--11, 1994.

\bibitem[Horev et~al.(2009)Horev, Katz, Krakovski, and
  L{\"o}ffler]{HorevKaKr09}
Elad Horev, Matthew~J. Katz, Roi Krakovski, and Maarten L{\"o}ffler.
\newblock Polychromatic 4-coloring of guillotine subdivisions.
\newblock \emph{Information Processing Letters}, 109\penalty0 (13):\penalty0
  690--694, 2009.

\bibitem[Ichiishi and Idzik(1999)]{IchiishiId99}
Tatsuro Ichiishi and Adam Idzik.
\newblock Equitable allocation of divisible goods.
\newblock \emph{Journal of Mathematical Economics}, 32\penalty0 (4):\penalty0
  389--400, 1999.

\bibitem[Igarashi and Zwicker(2021)]{IgarashiZw21}
Ayumi Igarashi and William~S. Zwicker.
\newblock Fair division of graphs and of tangled cakes.
\newblock \emph{arXiv preprint arXiv:2102.08560}, 2021.

\bibitem[Iyer and Huhns(2009)]{IyerHu09}
Karthik Iyer and Michael~N. Huhns.
\newblock A procedure for the allocation of two-dimensional resources in a
  multiagent system.
\newblock \emph{International Journal of Cooperative Information Systems},
  18\penalty0 (3--4):\penalty0 381--422, 2009.

\bibitem[Katz(1997)]{Katz97}
Matthew~J. Katz.
\newblock {3-D} vertical ray shooting and {2-D} point enclosure, range
  searching, and arc shooting amidst convex fat objects.
\newblock \emph{Computational Geometry}, 8\penalty0 (6):\penalty0 299--316,
  1997.

\bibitem[Keil(2000)]{keil2000polygon}
J.~Mark Keil.
\newblock Polygon decomposition.
\newblock In J\"{o}rg-R\"{u}diger Sack and Jorge Urrutia, editors,
  \emph{Handbook of Computational Geometry}, chapter~11, pages 491--518.
  Elsevier, 2000.

\bibitem[Khan and Pittu(2020)]{KhanPi20}
Arindam Khan and Madhusudhan~Reddy Pittu.
\newblock On guillotine separability of squares and rectangles.
\newblock In \emph{Proceedings of the 23rd International Workshop on
  Approximation Algorithms for Combinatorial Optimization Problems (APPROX)},
  pages 47:1--47:22, 2020.

\bibitem[Kim et~al.(2022)Kim, Kim, and Kwon]{kim2022rainbow}
Jinha Kim, Minki Kim, and O-joung Kwon.
\newblock Rainbow independent sets on dense graph classes.
\newblock \emph{Discrete Applied Mathematics}, 312:\penalty0 45--51, 2022.

\bibitem[Kurokawa et~al.(2018)Kurokawa, Procaccia, and Wang]{KurokawaPrWa18}
David Kurokawa, Ariel~D. Procaccia, and Junxing Wang.
\newblock Fair enough: Guaranteeing approximate maximin shares.
\newblock \emph{Journal of the ACM}, 64\penalty0 (2):\penalty0 8:1--8:27, 2018.

\bibitem[Lonc and Truszczynski(2020)]{LoncTr20}
Zbigniew Lonc and Miroslaw Truszczynski.
\newblock Maximin share allocations on cycles.
\newblock \emph{Journal of Artificial Intelligence Research}, 69:\penalty0
  613--655, 2020.

\bibitem[Lv and Lu(2021)]{lv2021rainbow}
Zequn Lv and Mei Lu.
\newblock Rainbow independent sets in cycles.
\newblock \emph{arXiv preprint arXiv:2103.05202}, 2021.

\bibitem[Ma et~al.(2022)Ma, Hou, Gao, Liu, and Yin]{ma2022rainbow}
Yue Ma, Xinmin Hou, Jun Gao, Boyuan Liu, and Zhi Yin.
\newblock Rainbow independent sets in graphs with maximum degree two.
\newblock \emph{Discrete Applied Mathematics}, 317:\penalty0 101--108, 2022.

\bibitem[Messaoud et~al.(2008)Messaoud, Chu, and Espinouse]{Messaoud08}
Said~Ben Messaoud, Chengbin Chu, and Marie-Laure Espinouse.
\newblock Characterization and modelling of guillotine constraints.
\newblock \emph{European Journal of Operational Research}, 191\penalty0
  (1):\penalty0 112--126, 2008.

\bibitem[Mitchell(2022)]{mitchell2022approximating}
Joseph S.~B. Mitchell.
\newblock Approximating maximum independent set for rectangles in the plane.
\newblock In \emph{Proceedings of the 62nd Annual Symposium on Foundations of
  Computer Science (FOCS)}, pages 339--350, 2022.

\bibitem[Nandakumar and Ramana~Rao(2012)]{nandakumar2012fair}
R.~Nandakumar and N.~Ramana~Rao.
\newblock Fair partitions of polygons: An elementary introduction.
\newblock \emph{Proceedings - Mathematical Sciences}, 122\penalty0
  (3):\penalty0 459--467, 2012.

\bibitem[Pach and Tardos(2000)]{PachTa00}
J{\'a}nos Pach and G{\'a}bor Tardos.
\newblock Cutting glass.
\newblock In \emph{Proceedings of the 16th Annual Symposium on Computational
  Geometry (SCG)}, pages 360--369, 2000.

\bibitem[Robertson and Webb(1998)]{RobertsonWe98}
Jack Robertson and William Webb.
\newblock \emph{Cake-Cutting Algorithms: Be Fair if You Can}.
\newblock Peters/CRC Press, 1998.

\bibitem[Russo et~al.(2020)Russo, Boccia, Sforza, and Sterle]{RussoBoSf20}
Mauro Russo, Maurizio Boccia, Antonio Sforza, and Claudio Sterle.
\newblock Constrained two-dimensional guillotine cutting problem: Upper-bound
  review and categorization.
\newblock \emph{International Transactions in Operational Research},
  27\penalty0 (2):\penalty0 794--834, 2020.

\bibitem[Segal-Halevi(2019)]{Segalhalevi19}
Erel Segal-Halevi.
\newblock Cake-cutting with different entitlements: How many cuts are needed?
\newblock \emph{Journal of Mathematical Analysis and Applications},
  480\penalty0 (1):\penalty0 123382, 2019.

\bibitem[Segal-Halevi et~al.(2017)Segal-Halevi, Nitzan, Hassidim, and
  Aumann]{SegalhaleviNiHa17}
Erel Segal-Halevi, Shmuel Nitzan, Avinatan Hassidim, and Yonatan Aumann.
\newblock Fair and square: Cake-cutting in two dimensions.
\newblock \emph{Journal of Mathematical Economics}, 70\penalty0 (8):\penalty0
  1--28, 2017.

\bibitem[Segal-Halevi et~al.(2020)Segal-Halevi, Hassidim, and
  Aumann]{SegalhaleviHaAu20}
Erel Segal-Halevi, Avinatan Hassidim, and Yonatan Aumann.
\newblock Envy-free division of land.
\newblock \emph{Mathematics of Operations Research}, 45\penalty0 (3):\penalty0
  896--922, 2020.

\bibitem[Shtechman et~al.(2022)Shtechman, Gonen, and
  Segal-Halevi]{ShtechmanGoSe22}
Itay Shtechman, Rica Gonen, and Erel Segal-Halevi.
\newblock Fair cake-cutting algorithms with real land-value data.
\newblock \emph{Autonomous Agents and Multi-Agent Systems}, 35\penalty0
  (2):\penalty0 39:1--39:28, 2022.

\bibitem[Steinhaus(1948)]{Steinhaus48}
Hugo Steinhaus.
\newblock The problem of fair division.
\newblock \emph{Econometrica}, 16\penalty0 (1):\penalty0 101--104, 1948.

\bibitem[Stromquist(1980)]{Stromquist80}
Walter Stromquist.
\newblock How to cut a cake fairly.
\newblock \emph{American Mathematical Monthly}, 87\penalty0 (8):\penalty0
  640--644, 1980.

\bibitem[Suksompong(2021)]{Suksompong21}
Warut Suksompong.
\newblock Constraints in fair division.
\newblock \emph{ACM SIGecom Exchanges}, 19\penalty0 (2):\penalty0 46--61, 2021.

\bibitem[Suksompong and Teh(2022)]{SuksompongTe22}
Warut Suksompong and Nicholas Teh.
\newblock On maximum weighted {N}ash welfare for binary valuations.
\newblock \emph{Mathematical Social Sciences}, 117:\penalty0 101--108, 2022.

\bibitem[Yao et~al.(2003)Yao, Chen, Cheng, and Graham]{YaoChCh03}
Bo~Yao, Hongyu Chen, Chung-Kuan Cheng, and Ronald Graham.
\newblock Floorplan representations: Complexity and connections.
\newblock \emph{ACM Transactions on Design Automation of Electronic Systems},
  8\penalty0 (1):\penalty0 55--80, 2003.

\end{thebibliography}

\appendix

\section{Improved Bounds for Three and Four Agents}
\label{app:improved-3-4}

In the case of three agents, Theorem~\ref{thm:rectangle-n} implies a guarantee of $\mms_i^{17}$.
We improve this guarantee in the following theorem.

\begin{theorem}
\label{thm:rectangle-3}
For every land division instance with a rectangular land and $n=3$ agents,
there exists an $s$-separated allocation 
in which each agent $i$ receives an axis-aligned rectangle with value at least 
$\emph{MMS}_i^{14}$.
\end{theorem}

\begin{proof}
Fix a $1$-out-of-$14$ maximin partition for each agent. We consider two cases.

\emph{Case 1}:  All agents have a vertical $1$:$3$-rectangle cut. 
Implement the leftmost such cut and give the rectangle to its left to the leftmost cutter, 
breaking ties arbitrarily.
Each of the other two agents has at least three MMS-rectangles in the remaining land, so we can divide the land between them using Theorem~\ref{thm:rectangle-2}.

\emph{Case 2}:  At least one agent, say Alice, has no vertical $1$:$3$-rectangle cut. 
Then by Lemma~\ref{lem:cuts}, she has a vertical $d$-rectangle stack, where $d\ge 12$.

Denote the $y$-coordinates of the top sides (resp., bottom sides) of the rectangles in the stack, from bottom to top, by $(t_j)_{j\in [d]}$ (resp., $(b_j)_{j\in [d]}$).
Note that $t_j + s\leq b_{j+1}$ for all $j\in [d-1]$.

For each $j\in[d]$, 
the rectangle $R(0,t_j)$ contains the first $j$ value-$1^+$ rectangles in the stack counting from the bottom. 
Denote by $H(j)$ the set of agents whose value for $R(0,t_j)$ is at least $\vreq(2)$. 

We have $H(j)\subseteq H(j')$ whenever $j' > j$, and so
the sequence $(|H(j)|)_{j\in [d]}$ is non-decreasing.
Since $0\le |H(j)|\le 3$ while $d \ge 12 > 5$,
by the pigeonhole principle 
there exists a $z \in[d-1]$ such that $|H(z+1)|=|H(z)|$, and hence also $H(z+1)=H(z)$.
Crucially, the agents not in $H(z+1)=H(z)$ value $R(0,t_{z+1})$ less than $\vreq(2)$, so they value $R(b_{z+1},1)$ more than $14-\vreq(2)\ge 7$. 

We consider four subcases based on the value of $|H(z)|=|H(z+1)|$. For readability we set 
$R^-:=R(0, t_z)$ and $R^+:=R(b_{z+1}, 1)$.

\begin{itemize}
\item $|H(z)|=|H(z+1)|=0$. 
Then all agents value $R^+$ more than $7$.
Give $R^-$ to Alice (who values it at least $1$), and divide $R^+$ between the other two agents using 
Lemma~\ref{lem:rectangle-2-fraction}.
\item $|H(z)|=|H(z+1)|=1$. 
Then give $R^-$ to the agent in $H(z)$ (who values it at least~$7$), and divide $R^+$ between the other two agents,
who value it more than~$7$ each, using Lemma~\ref{lem:rectangle-2-fraction}. 
(Note that this is where we need $14$ in the theorem statement.)
\item $|H(z)|=|H(z+1)|=2$.
Then divide $R^-$ between the two agents in $H(z)$ using 
Lemma~\ref{lem:rectangle-2-fraction}, and give $R^+$ to the agent not in $H(z)$,
who values it more than $14-7 > 1$.
\item $|H(z)|=|H(z+1)|=3$. 
Give $R^+$ to Alice (who values it at least $1$), and divide $R^-$ between the other two agents (who value it at least $\vreq(2)$ each) using Lemma~\ref{lem:rectangle-2-fraction}.\qedhere
\end{itemize}
\end{proof}

We now proceed to four agents.
Recall that, by the proof of \Cref{thm:rectangle-n}, $\vreq(3)\leq 17$.

\begin{theorem}
\label{thm:rectangle-4}
For any land division instance with a rectangular land and $n=4$ agents,
there exists an $s$-separated allocation 
in which each agent $i$ receives an axis-aligned rectangle with value at least 
$\emph{MMS}_i^{24}$.
\end{theorem}

\begin{proof}
We consider two cases.

\emph{Case 1}:  All agents have a vertical $3$:$3$-rectangle cut. 
Cut the cake at a median of the four cuts. Divide the rectangles to the left (resp., right) between the two leftmost (resp., rightmost) cutters using Theorem \ref{thm:rectangle-2}.

\emph{Case 2}:  At least one agent, say Alice, has no vertical $3$:$3$-rectangle cut. 
Then by Lemma \ref{lem:cuts}, she has a vertical $d$-rectangle stack, where $d\geq 20 > 16$.

For $1\leq j\leq d$, define $t_j$ and $b_j$ in the same way as in Theorem~\ref{thm:rectangle-3}, and for $z\in\{2,3\}$, denote by $H(j,z)$ the set of agents who value $R(0,t_j)$ at least $\vreq(z)$.
Note that $|H(j,z)|$ is non-decreasing in $j$ and non-increasing in $z$.

Similarly to Theorem \ref{thm:rectangle-3}, we look for equal consecutive values of $H$, i.e., values $j$ such that $H(j,z)=H(j+1,z)$. 
In the following claims, our goal is to partition the land into two $s$-separated rectangles such that some $\ell\in[n-1]$ agents value one rectangle at least $\vreq(\ell)$ and the other $n-\ell$ agents value the other rectangle at least $\vreq(n-\ell)$, so that the instance reduces into two smaller instances.

\begin{claim}
\label{claim:rectangle-1}
If $|H(j,z)| = |H(j+1,z)| = z$ for some $1\leq j\leq d-1$ and $z\in\{2,3\}$, then we are done.
\end{claim}
\begin{proof}
The condition implies that $H(j,z) = H(j+1,z)$.
The $z$ agents in $H(j,z)$ value $R(0,t_j)$ at least $\vreq(z)$. 
The other $n-z$ agents are not in $H(j+1,z)$, so they value $R(0,t_{j+1})$ and therefore $R(0,b_{j+1})$ less than $\vreq(z)$; this implies that they value $R(b_{j+1}, 1)$ more than $24-\vreq(z)$. 
One can check that $24-\vreq(z) \geq \vreq(n-z)$ for all $1\leq z\leq n-1$.
We can therefore allocate $R(0,t_j)$ to the first $z$ agents and 
$R(b_{j+1}, 1)$ to the remaining $n-z$ agents.
\end{proof}

\begin{claim}
\label{claim:rectangle-2}
If $|H(j,z)| = |H(j+1,z)| = z-1$ for some $1\leq j\leq d-1$ and $z\in\{2,3\}$, then we are done.
\end{claim}
\begin{proof}
Similarly to Claim~\ref{claim:rectangle-1}, 
the $z-1$ agents in $H(j,z)$ value $R(0,t_j)$ at least $\vreq(z) \ge \vreq(z-1)$, and the other $n-z+1$ agents value $R(b_{j+1}, 1)$ more than $24-\vreq(z)$. 
We have
$24 - \vreq(2)\geq 24-7 = 17 \geq \vreq(3)$, so $24-\vreq(z) \geq \vreq(n-z+1)$ for all $z\in\{2,3\}$.
We can allocate $R(0,t_j)$ to the $z-1$ agents in $H(j,z)$ and 
$R(b_{j+1}, 1)$ to the remaining $n-z+1$ agents.
\end{proof}

\begin{claim}
\label{claim:rectangle-3}
If $|H(j,2)| = |H(j+1,2)| = 0$ for some $1\leq j\leq d-1$, then we are done.
\end{claim}
\begin{proof}
All $n$ agents value $R(0,t_{j+1})$ and hence $R(0,b_{j+1})$ less than $\vreq(2)$,
so they value $R(b_{j+1}, 1)$ more than $24-\vreq(2)\geq \vreq(3)$.
We allocate $R(0,t_j)$ to Alice, who values it at least $1$, and 
$R(b_{j+1}, 1)$ to the other three agents.
\end{proof}

\begin{claim}
\label{claim:rectangle-4}
If $|H(j,3)| = |H(j+1,3)| = 4$ for some $1\leq j\leq d-1$, then we are done.
\end{claim}
\begin{proof}
All $n$ agents value $R(0,t_{j})$ at least $\vreq(3)$.
We can allocate $R(b_{j+1},1)$ to Alice, who values it at least $1$, and 
$R(0,t_j)$ to the remaining three agents.
\end{proof}

We are now ready to complete the proof of \Cref{thm:rectangle-4}.
Since $|H(j,z)|$ has only $n+1=5$ possible values and $d\geq 16$, there must be a set $H(j,2)$ repeated at least four times in succession, i.e., there exists a $j_2$ such that $H(j_2,2)=H(j_2+1,2)=H(j_2+2,2)=H(j_2+3,2)$.
We denote by $m$ the size of this repeating set (if there is more than one such set, we pick one arbitrarily). We proceed in cases.
\begin{itemize}
\item If $m=0$, then by Claim~\ref{claim:rectangle-3} we are done.
\item If $m=1$, then by Claim~\ref{claim:rectangle-2} we are done.
\item If $m=2$, then by Claim~\ref{claim:rectangle-1} we are done.
\item If $m =  3$ or $4$, then we proceed by 
the value of 
$|H(j_2+1,3)|$:
\begin{itemize}
\item If $|H(j_2+1,3)| \leq 2$, then 
$H(j_2+1,3)\subsetneq H(j_2+1,2) = H(j_2,2)$.
Give $R(0,t_{j_2})$ to a pair of agents in 
$H(j_2,2)$, ensuring that this pair includes all of the agents in $H(j_2+1,3)$.
Give $R(b_{j_2+1},1)$ to the other two agents, who are not in 
$H(j_2+1,3)$ and therefore value 
$R(b_{j_2+1},1)$ at least $24-\vreq(3)\geq 24-17 = 7 \geq \vreq(2)$.
\item If $|H(j_2+1,3)| \geq 3$, then
the sequence $|H(j_2+1,3)|,~|H(j_2+2,3)|,~|H(j_2+3,3)|$, which is non-decreasing, 
contains only the numbers $3$ and $4$.
Hence, one of these numbers must appear at least twice in succession.
If this repeating number is $3$, then by Claim~\ref{claim:rectangle-1} we are done.
Else, this number is $4$, and by Claim~\ref{claim:rectangle-4} we are again done.
\qedhere
\end{itemize}
\end{itemize}
\end{proof}

\section{Rectangle MMS vs. Fat-Rectangle MMS}
\label{app:square-mms-vs-rectangle-mms}

For the following discussion, we focus on the value function of a single agent and assume that the land is the unit square $[0,1]\times[0,1]$.
Denote by $\mms^k[r]$ the agent's $1$-out-of-$k$ MMS with respect to $r$-fat rectangles, where $k\ge 1$ is an integer and $r\ge 1$ a real number; in particular, $\mms^k[1]$ is the MMS with respect to squares.
Denote by $\mms^k[\infty]$ the agent's MMS with respect to arbitrary rectangles. 
Obviously, $\mms^k[1]\leq \mms^k[r]\leq \mms^k[\infty]$ for any $r\ge 1$.

When $s=0$, \citet{SegalhaleviNiHa17} proved that there always exists a collection of $k$ non-overlapping squares such that the value of each square is at least $1/(2k)$ times the value of the land. 
Since $\mms^k[\infty]$ is at most $1/k$ times the value of the land, we have that
$\mms^k[r]\geq \frac{1}{2}\cdot\mms^k[\infty]$
for any $r\ge 1$.
However, the following proposition shows that a similar result does not hold when $s>0$, even for the case $k = 2$.

\begin{proposition}
\label{prop:square-mms-vs-rectangle-mms}
Assume that the land is the unit square $[0,1]\times[0,1]$.
For any $r\geq 1$, $\varepsilon>0$, and $s\in (0,1)$, there exists a value function such that 
$\emph{MMS}^2[r]\leq 2\varepsilon\cdot  \emph{MMS}^{2}[\infty].$
\end{proposition}

\begin{proof}
The claim holds trivially for $\varepsilon \geq 1/2$, so we assume 
$\varepsilon < 1/2$, which implies $\varepsilon < r$.

Recall that we measure distance in $\ell_\infty$ norm.
If the statement is true for $\varepsilon = \varepsilon'$ for some value $\varepsilon'$, it is also true for all $\varepsilon > \varepsilon'$.
Hence, it suffices to prove the statement when $\varepsilon$ is sufficiently small so that
 $s + 2s\varepsilon/r < 1$.
 
The value is uniformly distributed in the two strips shown in \Cref{fig:square-mms-vs-rectangle-mms}, whose value is $1$ each.
Specifically, the bottom strip is the rectangle $[0,s]\times [0,s\varepsilon/r]$ and the top strip is the rectangle $[0,s]\times [s+s\varepsilon/r,s+2s\varepsilon/r]$.

Since the distance between the two strips is exactly $s$, we have $\mms^2[\infty]=1$.
On the other hand, 
$\mms^2[r]\leq 2\varepsilon$.
To see this, consider two $s$-separated $r$-fat rectangles, and assume that each of them has a positive value. 
Since each strip has width $s$ and height $s\varepsilon/r < s$, the two rectangles cannot overlap the same strip.
Hence, one of the rectangles---call it $R$---must overlap only the bottom strip.
If $R$ has height greater than $2s\varepsilon/r$, then its distance to any point in the top strip is less than $s$, which is impossible as the other $r$-fat rectangle must overlap the top strip.
So the height of $R$ is at most $2s\varepsilon/r$.
Since $R$ is $r$-fat, its width is at most $2s\varepsilon$, which is a $2\varepsilon$ fraction of the width of the bottom strip.
This means that the value of $R$ is at most $2\varepsilon$, and therefore $\mms^2[r]\leq 2\varepsilon$.
\end{proof}

\begin{figure}
\begin{center}
\begin{tikzpicture}[scale=0.7]
\draw[black, thick] (0,0) rectangle (5,5);
\draw[blue, fill=blue] (0,0) rectangle (2.5,0.6);
\draw[blue, fill=blue] (0,3.1) rectangle (2.5,3.7);
\node at (-0.3,-0.3) {\footnotesize $0$};
\node at (-0.3,5) {\footnotesize $1$};
\node at (5,-0.3) {\footnotesize $1$};
\node at (2.5,-0.35) {\footnotesize $s$};
\node at (-0.6,0.6) {\footnotesize $s\varepsilon/r$};
\node at (-0.97,3.1) {\footnotesize $s+s\varepsilon/r$};
\node at (-1.09,3.7) {\footnotesize $s+2s\varepsilon/r$};
\end{tikzpicture}
\end{center}
\caption{
Construction in the proof of \Cref{prop:square-mms-vs-rectangle-mms}.
}
\label{fig:square-mms-vs-rectangle-mms}
\end{figure}

\section{Alternative Algorithms for MMS Approximation}
\label{app:alternative-MMS}

\subsection{Guillotine partitions}
\label{app:alternative-guillotine}

The following theorem shows that we can compute a nearly optimal $s$-separated
guillotine maximin partition efficiently. 
Our algorithm proceeds by discretizing the land and finding an optimal 
$s$-separated guillotine partition that is consistent with this
discretization---such a partition can be computed by dynamic programming.

\begin{theorem}
\label{thm:guillotine-dp}
Consider a rectangular land-subset $L$, 
an agent $i$ with $v_i(L)=1$, a separation parameter $s>0$, and a positive integer~$k$.
Let $V := \gumms_i^{k,s}$.
Then, given $\varepsilon>0$, $s$, and $k$,
we can compute an $s$-separated guillotine partition of $L$
into $k$ parts such that $i$'s value for each part is at least $V-\varepsilon$,
in time polynomial in $k$ and $1/\varepsilon$.
\end{theorem}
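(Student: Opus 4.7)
The plan is to reduce the problem to a polynomial-time dynamic program by discretizing the land, exploiting the fact that guillotine partitions admit a recursive two-piece split structure that matches the DP pattern naturally.

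First, using $O(1/\varepsilon)$ $\cut$ queries to agent $i$, I would build grid positions $a_0 = x_0 < x_1 < \cdots < x_M = a_1$ on the $x$-axis such that each strip $[x_{j-1}, x_j] \times [b_0, b_1]$ has value at most $\varepsilon/8$, and analogously $b_0 = y_0 < \cdots < y_N = b_1$ on the $y$-axis; here $M, N = O(1/\varepsilon)$. To accommodate cuts of width $s$, I would extend each axis grid by also including the shifted positions $x_j + s$ and $y_k + s$ (clipped to $L$), producing augmented grids $G_x, G_y$ of size $O(1/\varepsilon)$. With this augmentation, a guillotine cut placed at an original grid position automatically has both sides of its $s$-gap in the augmented grid.

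Next, I would argue a \emph{rounding lemma}: any $s$-separated guillotine partition $\calP^*$ of $L$ with min-part value at least $V$ can be rounded to a $(G_x, G_y)$-aligned $s$-separated guillotine partition $\tilde\calP$ with min-part value at least $V - \varepsilon$. I would push this through by induction on the recursion depth of $\calP^*$: each cut position is moved to its nearest original grid position, with the choice of rounding direction made so as to preserve both the $s$-gap and the strict containment of the two sub-rectangles. The recursive structure is preserved, so $\tilde\calP$ remains a valid guillotine partition. Each rectangle of $\tilde\calP$ has at most four cut boundaries, each of which moves by less than one grid step and therefore sweeps a strip of value at most $\varepsilon/8$; hence each rectangle's value changes by at most $\varepsilon/2 < \varepsilon$.

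The algorithm itself is a dynamic program with state $(R, j)$, where $R$ is a sub-rectangle whose corners lie in $G_x \times G_y$ and $j \in [k]$; the value $f(R, j)$ is the maximum, over $s$-separated guillotine partitions of $R$ into $j$ parts, of the minimum part value. Base cases $f(R,1) = v_i(R)$ come from one $\eval$ query each. For $j \ge 2$, $f(R, j)$ is the maximum, over a choice of cut direction, a grid-aligned cut position within $R$, and a split $j_1 + j_2 = j$, of $\min\{f(R_1, j_1), f(R_2, j_2)\}$. There are $O(1/\varepsilon^4)$ candidate sub-rectangles, $O(1/\varepsilon)$ cut positions per rectangle, and $O(k)$ splits per cut, so the DP runs in time polynomial in $k$ and $1/\varepsilon$. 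Tracing back the decisions recovers a concrete partition of $L$ into $k$ parts which, by the rounding lemma, has min-part value at least $V-\varepsilon$.

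The main obstacle I foresee lies in the rounding lemma: we must ensure that the $s$-gap survives every cut rounding, that no sub-rectangle collapses to zero area down the recursion (especially when a cut is very close to the boundary of its enclosing rectangle), and that the rounded object really is a valid guillotine partition at every level. Including the $s$-shifted positions in the augmented grid, combined with choosing the rounding direction based on where the cut sits relative to its enclosing rectangle, should resolve these subtleties, but the induction on the recursive structure must be handled carefully so that the per-rectangle value loss estimate remains valid.
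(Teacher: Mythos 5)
Your overall architecture---a value-based grid of $O(1/\varepsilon)$ strips per axis followed by a dynamic program over grid-aligned $s$-separated guillotine partitions---is exactly the paper's, and your DP and its complexity analysis are fine. The problem is in your rounding lemma, specifically in the value-loss accounting. When you round a vertical cut $[c,c+s]$ to $[\tilde c,\tilde c+s]$, the \emph{near} side of the gap moves from $c$ to $\tilde c$ within a single original grid strip, so the swept value is indeed at most $\varepsilon/8$; but the \emph{far} side moves from $c+s$ to $\tilde c+s$, and the swept region is an $s$-translate of a sub-strip. The grid was built so that each strip $[x_{j-1},x_j]\times[b_0,b_1]$ has value at most $\varepsilon/8$; this says nothing about the value of $[x_{j-1}+s,x_j+s]\times[b_0,b_1]$, which can be arbitrarily close to $1$ (concentrate the density in a thin band at offset $s$ from a wide, worthless strip). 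So the claim that every boundary ``sweeps a strip of value at most $\varepsilon/8$'' is false for the boundaries sitting on the far sides of gaps, and your $\varepsilon/2$ per-rectangle bound does not follow. Adding the $s$-shifted positions to the grid makes these boundaries \emph{representable} but does not control the value they sweep. One can try to patch this by always rounding cuts toward the side whose piece thereby only grows, but then the collapse and containment issues you flag in your last paragraph become load-bearing and are not resolved in the proposal.

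The paper avoids all of this by never moving cuts at all: it maps each rectangle $R$ of the optimal partition to $I(R)$, the largest grid-aligned rectangle contained in $R$ (snap each side \emph{inward} to the nearest grid line). Each rectangle loses at most four borders of value at most $\delta=\varepsilon/4$ each, shrinking trivially preserves $s$-separation and the guillotine structure, and no $s$-shifted grid points are needed---the DP itself handles the gap by pairing a cut index $\ell$ with the smallest index $\ell'$ satisfying $y_{\ell'}-y_\ell\ge s$ (the resulting sub-rectangle may be larger than the corresponding region of $I(\calP)$, which is harmless since the DP does not require the $t$ rectangles to tile their enclosing rectangle). I would recommend replacing your cut-rounding induction with this per-rectangle shrinking map; the rest of your argument then goes through.
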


\begin{proof} The algorithm proceeds in two steps.
~
\paragraph{Step 1: Discretization.}
Assume that $L=[a_0, a_1]\times [b_0, b_1]$.
Let $\delta = \varepsilon/4$, and let $d= \left\lceil 1/\delta\right\rceil$.
Using $\cut$ queries, we  cut $L$ into $d$ vertical strips
such that $i$'s value for each strip is at most $\delta$; let $x_0=a_0, x_1, \dots, x_{d-1}, x_d=a_1$
be the respective cut points.
Similarly, using $\cut$ queries, we can cut $L$ into $d$ horizontal strips
such that $i$'s value for each strip is at most $\delta$; let $y_0=b_0, y_1, \dots, y_{d-1}, y_d=b_1$
be the respective cut points.

Let $G=\{(x_i, y_j): 0\le i, j\le d\}$ be the set of \emph{grid-points}, i.e., intersection points of the cuts.
Define a \emph{grid-rectangle} as a rectangle whose corners are in $G$, that is, a rectangle $[x_i, x_{i'}]\times [y_j, y_{j'}]$ for some $0\le i< i'\le d$ and $0\le j< j'\le d$. We now show that, by limiting our attention to grid-rectangles, we lose a value of at most $\varepsilon$.

\begin{claim}
\label{claim:guillotine-to-general}
For each $s$-separated guillotine partition with minimum value $V$, there exists an $s$-separated guillotine partition such that all parts are grid-rectangles, and the value of every part is at least $V-\varepsilon$.
\end{claim}

\begin{proof}[Proof of Claim~\ref{claim:guillotine-to-general}]
We first prove that, for every axis-parallel rectangle $R\subseteq L$, there exists a grid-rectangle $I(R)\subseteq R$ such that $v_i(I(R))\geq v_i(R)-\varepsilon$.
Consider a rectangle $R = [x, x']\times [y, y']$. Find the smallest value of $i$
such that $x_i\ge x$ and the largest value of $i'$ such that $x_{i'}\le x'$.
Similarly, find the smallest value of $j$ such that 
$y_j\ge y$ and the largest value of $j'$ such that $y_{j'}\le y'$.
Denote the grid-rectangle $[x_i, x_{i'}]\times [y_j, y_{j'}]$ by $I(R)$.
By construction, $v_i(I(R))\geq v_i(R)-4\delta=v_i(R)-\varepsilon$, since we removed four `borders' of value at most $\delta$ each.

Now, construct a new partition in which each rectangle $R$ in the original partition is replaced by its subset grid-rectangle $I(R)$.
Clearly, the new partition is still $s$-separated, still a guillotine partition, contains only grid-rectangles, and the value of every part in this partition is at least $V-\varepsilon$.
\end{proof}

\paragraph{Step 2: Dynamic program.}
For each 4-tuple of indices $i, i', j, j'$ with $0\le i< i'\le d$, 
$0\le j<j'\le d$, and each $t\in [k]$, let $B[i, i', j, j', t]$
be the maximum value $z$ such that there exists an $s$-separated guillotine partition of 
$[x_i, x_{i'}]\times [y_j, y_{j'}]$ into $t$ grid-rectangles in which
the value of each rectangle is at least $z$.
We can compute the $O(k/\varepsilon^4$) entries of $B$ by dynamic programming as follows.

For each $0\le i< i'\le d$ and $0\le j< j'\le d$, we set 
$B[i, i', j, j', 1]$ to the value of the rectangle $[x_i, x_{i'}]\times [y_j, y_{j'}]$.
To compute $B[i, i', j, j', t]$ with $t>1$, we consider the following options:
\begin{itemize}
\item[(i)] 
splitting $[x_i, x_{i'}]\times [y_j, y_{j'}]$ horizontally, i.e., picking 
some  integer $t'$ with $0<t'<t$, as well as some integer $\ell$ with $j<\ell<j'$, finding the smallest value 
$\ell'$ with $\ell<\ell'<j'$ such that $y_{\ell'}-y_\ell\ge s$,
and computing $\min\{B[i, i', j, \ell, t'], B[i, i', \ell', j', t-t']\}$; and 
\item[(ii)] 
splitting $[x_i, x_{i'}]\times [y_j, y_{j'}]$ vertically, i.e., picking 
some  integer $t'$ with $0<t'<t$, as well as some integer $r$ with $i<r<i'$, finding the smallest value 
$r'$ with $r<r'<i'$ such that $x_{r'}-x_r\ge s$,
and computing $\min\{B[i, r, j, j', t'], B[r', i', j, j', t-t']\}$.
\end{itemize}
The value $B[i, i', j, j', t]$ is obtained by considering
all $O(k/\varepsilon$) solutions of this form (iterating over all  choices of $t'$ and all possible values of 
$\ell$ and $r$ in (i) and (ii)) and finding the best one;
we set $B[i, i', j, j', t]=0$ if no such split is feasible.

Now, suppose that $B[0, d, 0, d, k] = z$, and let $\calP$
be the associated partition, which can be computed by standard
dynamic programming techniques \citep[Chapter~15]{CormenLeRi09}. By construction, $\calP$
is an $s$-separated guillotine partition, 
$i$'s value for each rectangle in $\calP$
is at least $z$,
and this is the highest minimum-value attainable by an $s$-separated guillotine partition into grid-rectangles.
By \Cref{claim:guillotine-to-general}, we have $z\ge V-\varepsilon$.

It follows that,
in the partition output by the dynamic program, agent $i$ 
values each rectangle at least $V-\varepsilon$.
\end{proof}

\subsection{General partitions}
\label{app:alternative-general}

Next, we present an alternative algorithm for finding an approximate maximin partition, that does not involve guillotine partitions. 
Instead, we use an approximation algorithm for the problem of \emph{maximum independent set of rectangles (MISR)}. An instance of MISR consists of a set $A$ of axis-parallel rectangles. 
The goal is to find a largest subset of $A$ in which all rectangles are pairwise-disjoint. This is an NP-hard problem, but it has various polynomial-time approximation algorithms. 
The most recent one is a 3-factor approximation algorithm by \citet{galvez20223}. 

\begin{lemma}
\label{lem:3-factor-decision}
Given a rectangular land-subset, a value function $v$, an integer $k\geq 2$, and real numbers $r\ge 0$ and $\varepsilon>0$, it is possible to decide in time polynomial in $k$ and $1/\varepsilon$ which of the following two options holds:
\begin{itemize}
\item $\mms^{k,s} \ge r$;
\item $\mms^{3 k,s} < r + \varepsilon$.
\end{itemize}
(At least one option must hold; if both options hold, it suffices to return one of them.)
If we return the first option, we can output an $s$-separated partition into $k$ rectangles such that each part has value at least $r$ as well.
\end{lemma}
\begin{proof}
We perform the discretization step as in the proof of \Cref{thm:guillotine-dp}---this yields $O(1/\varepsilon^2)$ grid-points and therefore $O(1/\varepsilon^4)$ grid-rectangles. 
Let $Z$ be the set of all grid-rectangles with value at least $r$. 
Replace each rectangle $R\in Z$ with $\wrap{R}{s}$ defined in the proof of \Cref{thm:r-fat}, and apply the algorithm of \citet{galvez20223} to $Z$.

\emph{Case 1}:  The algorithm finds a subset $Z'\subseteq Z$ with at least $k$ rectangles.
By removing the wrapping from each rectangle in $Z'$, we get an $s$-separated partition with $k$ rectangles of value at least $r$ each. Hence, we can return that $\mms^{k,s} \ge r$ along with the partition.

\emph{Case 2}:  The algorithm finds a subset with fewer than $k$ rectangles.
Due to the approximation guarantee of the algorithm, any subset of pairwise-disjoint rectangles in~$Z$ contains fewer than $3k$ rectangles.
This implies that there is no subset of $3k$ $s$-separated grid-rectangles with value at least~$r$ each.
By \Cref{claim:guillotine-to-general}, 
there is no subset of $3k$ $s$-separated rectangles with value at least $r+\varepsilon$ each.
Hence, we can return that $\mms^{3 k,s} < r + \varepsilon$.
\end{proof}

\begin{theorem}
\label{thm:3-factor}
Given a rectangular land-subset $L$, a value function $v$ normalized so that $v(L)=1$, an integer $k\geq 2$, and a real number $\varepsilon>0$, it is possible to compute in time polynomial in $k$ and $1/\varepsilon$ an $s$-separated partition into $k$ rectangles with value at least  $\mms^{3 k,s} - \varepsilon$ each.
\end{theorem}
\begin{proof}
We proceed by binary search.
Specifically, we maintain an interval $I =[r_1,r_2]$ such that $\mms^{k,s} \ge r_1$ and $\mms^{3k,s} < r_2+\varepsilon/2$.
Initially, we set $I := [0,1]$; since $v(L)=1$, this choice indeed satisfies the two inequalities.
Then, we iterate the following procedure:
\begin{itemize}
\item Let $r$ be the midpoint of $I$;
\item Apply the algorithm of \Cref{lem:3-factor-decision} (using $\varepsilon/2$ instead of $\varepsilon$);
\item If the algorithm returns that $\mms^{k,s} \ge r$, then set the left endpoint of $I$ to $r$, and keep the resulting partition into $k$ rectangles;
\item Otherwise (i.e., the algorithm returns that $\mms^{3k,s} < r+\varepsilon/2$), set the right endpoint of $I$ to $r$.
\end{itemize}

After $O(\log(1/\varepsilon))$ iterations, we have an interval of length at most $\varepsilon/2$; denote it by $[r, r+d]$, where $d\le \varepsilon/2$.
By construction, $\mms^{3k,s} < (r+d)+\varepsilon/2 \le r+\varepsilon$,
and we have an $s$-separated partition into $k$ rectangles of value at least $r > \mms^{3k,s}-\varepsilon$.
\end{proof}
\end{document}